\newenvironment{ilp}[1]
 {%
  \addtocounter{equation}{-1}%
  \crefalias{subequations}{ilp}
  \begin{subequations}%
  \def\@currentlabel{#1}%
 }
 {\end{subequations}}
\colorlet{mygreen}{green!70!black}
\colorlet{myblue}{cyan!90!black}
\colorlet{myorange}{orange!75!white}
\Crefname{lemma}{Lemma}{Lemmas}
\Crefname{observation}{Observation}{Observations}
\Crefname{corollary}{Corollary}{Corollaries}
\Crefname{definition}{Definition}{Definitions}
\Crefname{case}{Case}{Cases}
\Crefname{ilp}{ILP}{ILPs}
\crefname{ilp}{ILP}{ILPs}
\newcommand{\newproblem}[3]{
    \newglossaryentry{pr:#1}{type=problem,
        name={\ensuremath{#2}},
        description={#3},
        sort={#1}
    }
}
\newcommand{\prob}[1]{\glsentryname{pr:#1}}
\newcommand{\probf}[1]{\glsentrydesc{pr:#1} (\glsentryname{pr:#1})}
\newcommand{\probl}[1]{\glsentrydesc{pr:#1}}
\newabbreviation{DAG}{DAG}{directed acyclic graph}
\newabbreviation{DSP}{DSP}{directed series-parallel graph}
\newabbreviation{EAS}{EAS}{edge-anchored subgraph}
\newabbreviation{MEAS}{MEAS}{maximal edge-anchored subgraph}
\newabbreviation{FPTAS}{FPTAS}{Fully Polynomial Time Approximation Scheme}
\newabbreviation{FPT}{FPT}{Fixed Parameter Tractable}
\newabbreviation{MED}{MED}{minimum equivalent digraph}
\newabbreviation{LSP}{LSP}{laminar series-parallel graph}
\newabbreviation{MCF}{MCF}{multi-commodity flow}
\newabbreviation{MCPS}{MCPS}{minimum capacity-preserving subgraph}
\newabbreviation{ILP}{ILP}{integer linear program}
\let\union\cup
\let\intersect\cap
\let\Union\bigcup
\newcommand{\ecap}{\ensuremath{\textit{cap}}}
\newcommand{\cost}{\ensuremath{\textit{cost}}}
\newcommand{\comm}{\ensuremath{\tilde{e}}} %
\newcommand{\stretcha}{\beta}
\newcommand{\mcfalpha}{\alpha}
\newcommand{\ssep}{\mid}
\newcommand{\medval}{\ensuremath{\textit{med}}}
\newcommand{\univ}{\ensuremath{U}}
\newcommand{\setcover}{\ensuremath{\mathcal{C}}}
\newcommand{\sets}{\ensuremath{\mathcal{S}}}
\newcommand{\bigO}{\ensuremath{\mathcal{O}}}
\newcommand{\Q}{\ensuremath{\mathbb{Q}}}
\newcommand{\N}{\ensuremath{\mathbb{N}}}
\newcommand{\be}{\ensuremath{\coloneqq}}
\newcommand{\transpose}[1]{\ensuremath{#1^{\top}}}
\newcommand{\meancap}[1]{\ensuremath{\overline{\ecap_{#1}}}}
\newcommand{\flow}{\ensuremath{f}}
\newcommand{\flowsum}{\ensuremath{\mathbf{f}}}
\newcommand{\flowset}{\ensuremath{\mathcal{F}}}
\newcommand{\demand}{\ensuremath{T}}
\newcommand{\alldemand}{\ensuremath{\mathbb{T}_E}}
\newcommand{\lpfeas}{\ensuremath{x}}
\newcommand{\lpopt}{\ensuremath{x^*}}
\newcommand{\algval}{\ensuremath{z}}
\newcommand{\lpval}{\ensuremath{z^*}}
\newcommand{\ilpval}{\ensuremath{z^*_\mathit{int}}}
\DeclarePairedDelimiter\ceil{\lceil}{\rceil}
\tikzset{main_edge/.style={line width=2.3pt}}
\tikzset{overlap/.style={draw=magenta}}
\tikzset{dot/.style={circle,fill=black,inner sep=1pt,minimum size=1pt}}
\tikzset{edge/.style={
    rounded corners, -stealth, draw=black, very thick, shorten <= 2pt,shorten >= 2pt}}
\title{Traffic-Oblivious Multi-Commodity Flow Network Design}
\author{Markus Chimani}{Theoretical Computer Science, Osnabrück University,
Osnabrück, Germany}{markus.chimani@uos.de}{https://orcid.org/0000-0002-4681-5550}{}%
\author{Max Ilsen}{Theoretical Computer Science, Osnabrück University,
Osnabrück, Germany}{max.ilsen@uos.de}{https://orcid.org/0000-0002-4532-3829}{}
\authorrunning{M. Chimani and M. Ilsen} %
\keywords{Multi-commodity flow, Digraphs, LP-rounding, Approximation algorithm} %
\begin{document}
\maketitle

\begin{abstract}
    We consider the \probf{MMCFS} problem:
    given a directed graph~$G$ with edge capacities~$\ecap$ and a
    retention ratio~$\mcfalpha\in(0,1)$, find an edge-wise minimum subgraph~$G' \subseteq
    G$ such that for all traffic matrices~$T$ routable in $G$ using a \acl{MCF},
    $\mcfalpha\cdot T$ is routable in~$G'$.
    This natural yet novel problem is motivated by recent research that
    investigates how the power consumption in backbone computer networks can be
    reduced by turning off connections during times of low demand without
    compromising the quality of service.
    Since the actual traffic demands are generally not known beforehand,
    our approach must be traffic-oblivious, i.e., work for all possible sets
    of simultaneously routable traffic demands in the original network.

    In this paper we present the problem, relate it to other known problems in
    literature, and show several structural results, including a reformulation,
    maximum possible deviations from the optimum, and NP-hardness (as well as a
    certain inapproximability) already on very restricted instances.
    The most significant contribution is a
    $\max(\nicefrac{1}{\mcfalpha}, 2)$-approximation
    based on a surprisingly simple LP-rounding scheme.
    We also give instances where this worst-case approximation ratio is met and
    thus prove that our analysis is tight.
\end{abstract}

\newpage
\section{Introduction}

We present the (suprisingly seemingly novel) network design problem \probf{MMCFS}:
given a directed flow network~$G$ with edge capacities and a \emph{retention
ratio}~$\mcfalpha\in(0,1)$, find a subnetwork~$G'\subseteq G$ of minimum size
such that $G'$ still allows for a \ac{MCF} routing of \emph{any} traffic demands
routable in $G$ when they are scaled down by factor~$\mcfalpha$.

The problem arises naturally in recent research concerning power saving in backbone
(Tier~1) networks of Internet service providers. There,
the overall amount of traffic
has distinct peaks in the evenings (when people are, e.g., streaming videos) and
lows late at night and in the early
mornings~\cite{DBLP:journals/ton/SchullerACHS18}. Clearly, the networks are built to handle the peak times.
This opens up the possibility to reduce the power consumption of the network by
turning off some resources---e.g., connections, line cards, or servers---during low traffic
periods~\cite{DBLP:conf/icnp/ZhangYLZ10,DBLP:conf/icc/ChiaraviglioMN09}.

So consider a computer network that allows for the simultaneous routing of the
traffic at peak times.
This traffic is comprised of a set of \emph{commodities}, where each commodity
is identified by a pair of nodes~$(s,t)$ in the network and has a
\emph{demand}~$d$ specifying that $d$~flow units have to be sent from~$s$
to~$t$.
The entirety of the demands for each pair of nodes is encoded in a
\emph{traffic matrix}~$\demand$. %
Commonly, one makes the simplifying assumption that during
low traffic periods the traffic demands are upper bounded by a down-scaling of
$\demand$ using a factor~$\mcfalpha\in(0,1)$; the most practically relevant
scenarios concern $\mcfalpha \geq \frac{1}{2}$~\cite{DBLP:conf/lcn/OttenICA23}.
The task now is to minimize the size of the network by deactivating connections
such that the reduced network still accommodates a routing of the scaled-down
demands~$\mcfalpha\cdot\demand$.
However, in practice the traffic matrix~$\demand$ may change from day to day (in
fact, even within sampling windows of 15 minutes) and while we can assume the
capacity of the network to be large enough for all occurring traffic at any
given time, $\demand$~is usually not known beforehand. Thus, our solution should
be \emph{traffic-oblivious}, i.e., independent of any specific traffic matrix.

Technically, routing in realistic scenarios is \emph{not} done via fully general
\aclp{MCF}: while \ac{MCF} would be optimal in terms of minimizing
congestion~\cite{DBLP:journals/ton/SchullerACHS18}, it would be too complicated
and temporally unstable for the routing hardware.
Instead, simpler techniques like \emph{2-segment routing} are used (which, in
contrast to trivial shortest path routing, routes flow along a sequence of two
chained shortest subpaths)~\cite{DBLP:journals/rfc/rfc8402}.
Interestingly, studies show that in realistic networks, these routing solutions
are virtually identical to those achieved by
\ac{MCF}~\cite{DBLP:journals/ton/SchullerACHS18,DBLP:conf/infocom/BhatiaHKL15}.
At the same time, for a given fixed network and traffic matrix, an \ac{MCF}
can be computed in polynomial time while establishing an optimal routing table
for 2-segment routing (which is then deployable on router hardware) is
NP-hard~\cite{DBLP:conf/cp/HartertSVB15}.
Thus, we describe the feasibility of solutions in our problem
setting in terms of routability via \ac{MCF}, and assume that realistic
(simpler) routing protocols will still be able to attain effective routability.

Let us give a formal description of our problem:
Given a directed graph (or \emph{digraph})~$G=(V,E)$ with positive edge
capacities~$\ecap \colon E~\to~\Q$ and a traffic matrix~$\demand$, a \emph{flow}~$\flow_{s,t}
\colon E~\to~\Q$ from a vertex~$s\in V$ to a vertex~$t \in V$ is a function
satisfying the flow conservation constraints
\begin{align*}
    \sum_{uv \in E} \flow_{s,t}(uv) - \sum_{vu \in E} \flow_{s,t}(vu) &\; = \;
            \begin{cases*}
                -\demand(s,t) & if $v=s$\\
                \demand(s,t) & if $v=t$\\
                0 & else
            \end{cases*}
            & &\forall v\in V.
        \intertext{
    A \emph{\acf{MCF}} is a set of flows~$\flowset = \{\flow_{s,t} \ssep (s,t) \in V^2\}$
    satisfying}
    \sum_{(s,t) \in V^2} \flow_{s,t}(uv) & \leq \ecap(uv) & & \forall uv \in E.
\end{align*}
For an edge $e=st\in E$, we may use the shorthand notation $\demand(e) \be \demand(s,t)$.
Further, %
we call a traffic matrix~$\demand$ \emph{routable in an edge set} $A \subseteq
E$ if there exists an \ac{MCF}~$\flowset$ for~$(G,\ecap,\demand)$ with
$\sum_{\flow \in \flowset} \flow(e) = 0$ for all~$e \notin A$. %
Based on this notion, we define the \probf{MMCFS} problem as follows:
given a digraph~$G=(V,E)$ with edge capacities~$\ecap$ and a retention
ratio~$\mcfalpha\in(0,1)$, find the edge set~$A \subseteq E$ with minimum
cardinality such that for all traffic matrices~$T$ that are routable in~$E$,
$\mcfalpha\cdot\demand$ is routable in~$A$.

Throughout this paper, when mentioning a problem's name or the corresponding
abbreviation in sans-serif typeface, e.g.\ \prob{MMCFS}, we refer to the
optimization question.
To indicate that a subgraph is an optimal solution for the problem, we will
denote it by the abbreviation in normal typeface, e.g.\ an MMCFS~$(V,A)$ with $A\subseteq E$.

\subparagraph{Our contribution.}
We present the \prob{MMCFS} problem, which has both a natural formulation and
practical applicability. After discussing related problems from literature
in \Cref{sec:relwork}, we give some structural results in \Cref{sec:structural}:
We show how the \prob{MMCFS} problem, even though it is
traffic-oblivious in nature, can be reformulated to consider a
specific single \enquote{hardest} traffic matrix. We also establish how an \ac{MCF}
can be routed in an optimal \prob{MMCFS} solution, and how the ratio between the
values of a feasible \prob{MMCFS} solution and an optimal one relates to their
average edge capacities.
In \Cref{sec:complexity}, we prove that \prob{MMCFS} is
NP-hard already with unit edge capacities.
Additionally, we show that it is NP-hard (and a closely related problem cannot
be approximated within a sublogarithmic factor) already on \acp{DAG}.
Our most important contribution is given in \Cref{sec:mcfs_approx}, where we
present a $\max(\nicefrac{1}{\mcfalpha}, 2)$-approximation algorithm for
\prob{MMCFS}:
after modelling \prob{MMCFS} as an ILP, we can deduce a surprisingly simple
LP-rounding scheme, whose complexity is solely shifted to the correctness proof.
Moreover, we show that our analysis of this algorithm is tight.

\section{Related Work}\label{sec:relwork}
There is a rich body of work on \aclp{MCF}---see e.g.\ \cite[Ch.\
17]{DBLP:books/daglib/0069809} for a primer on this topic and
\cite{DBLP:journals/or/SalimifardB22} for a recent literature review.
The ability to route an \ac{MCF} in an \prob{MMCFS} solution not only determines
the latter's feasibility, the problem of routing an \ac{MCF} also has many
close ties to several other network design problems.
These, however, involve constraints unrelated to \prob{MMCFS}, are usually not
traffic-oblivious, and mostly focus on undirected graphs.
Concerning approaches on directed graphs, Foulds~\cite{Foulds1981273} minimizes
the cost of an \ac{MCF} in a bidirected network where the use of some
unidirectional arcs is prohibited to reduce congestion.
Gendron et al.~\cite{Gendron1999,DBLP:journals/ejco/GendronL14} discuss a
directed \ac{MCF} problem that considers costs for both the installation of an
edge and the amount of flow routed over it.
Further, in \emph{buy-at-bulk network
design}~\cite{DBLP:conf/waoa/Antonakopoulos10,DBLP:conf/soda/SalmanCRS97,DBLP:journals/siamcomp/ChekuriHKS10},
capacity on edges must be bought as cheaply as possible such that a given
traffic matrix becomes routable---with the caveat that larger amounts of
capacity can be bought at a lower price per capacity unit.

In \emph{robust network
design}~\cite{ben2005routing,DBLP:journals/tcs/Al-NajjarBL21,DBLP:journals/jcss/AzarCFKR04,DBLP:journals/talg/HajiaghayiKRL07},
possible traffic matrices are given as an uncertainty set in the form of a
polytope, and the objective is usually to minimize the cost of reserving
capacity on the edges.
While the \emph{dynamic routing} variant considers a different \ac{MCF} for
every traffic matrix in the polytope, \emph{static routing} specifies a fixed
unit flow for each commodity that is only scaled with the respective demand
value.
For directed graphs, Al-Najjar et al.~\cite{DBLP:conf/stacs/Al-NajjarBL22} show
that an exact algorithm for static routing
would yield an $\bigO(|V|)$-approximation for dynamic routing.
Our result can be seen as a better approximation ratio for the special uncertainty
set~$\{\mcfalpha\cdot\demand \ssep \mcfalpha \in (0,1),~\demand\text{ is
routable in }G\}$ under dynamic routing, but w.r.t.\ minimizing the number of
edges in a subgraph rather than the cost of reserved capacity.

There are also several related graph construction and subgraph minimization
problems:
Khuller et al.~\cite{DBLP:journals/ipl/KhullerRY94} give an approximation
algorithm for the construction of an undirected tree with constant degree that
accommodates given traffic demands between its leaves such that the maximum load
on any edge is minimized.
Otten et al.~\cite{DBLP:conf/lcn/OttenICA23} evaluate an \ac{ILP} and a
heuristic for a green traffic engineering problem on digraphs---however, there a
specific traffic matrix is also given %
(and rather than finding an edge subset of minimum
cardinality, they minimize the number of \enquote{line cards}, i.e., sets of 8
incident edges at each vertex).
Another well-known topic in the realm of subgraph minimization problems is that
of \emph{spanners}~\cite{DBLP:journals/csr/AhmedBSHJKS20}, i.e., subgraphs that
preserve the length of a shortest path within a given ratio (stretch factor)
between each pair of vertices.
There exists a correspondence between upper bounds on the stretch of
shortest paths and the congestion of \acp{MCF}, however, this only applies to the
existence of probabilistic mappings in undirected
graphs~\cite{DBLP:journals/corr/abs-0907-3631,DBLP:conf/stoc/Racke08}.
Nonetheless, this correspondence was used to find \emph{flow sparsifiers} in
undirected graphs~$G$~\cite{DBLP:journals/siamcomp/EnglertGKRTT14}.
While a \prob{MMCFS} solution is similar to a flow sparsifier that
preserves the congestion up to a factor~$\frac{1}{\mcfalpha}$, they differ in
that a flow sparsifier is an entirely new (undirected) graph, not necessarily subgraph,
containing a subset of the vertices of~$G$ but both old and new
edges~\cite{DBLP:conf/focs/Moitra09,DBLP:conf/stoc/LeightonM10,DBLP:conf/soda/AndoniGK14}.

Closely related to \prob{MMCFS} are classical \probf{DSND} problems, where, given
a (possibly capacitated) input digraph~$G=(V,E)$ and a requirement function~$r
\colon V^2 \to \N$, one aims to find an edge-wise minimum subgraph of~$G$ in
which one can send a flow of value~$r(s,t)$ from~$s$ to~$t$ for
every~$(s,t) \in V^2$.
On undirected graphs with unit edge capacities, there exists a 2-approximation by
Jain~\cite{DBLP:journals/combinatorica/Jain01},
which has been adapted to directed instances, but only for a very
restricted set of requirement
functions~\cite{DBLP:journals/networks/MelkonianT04}.
The \prob{DSND} problem most similar to \prob{MMCFS} is the
\probf{MCPS} problem~\cite{DBLP:conf/iwoca/ChimaniI23}, where the requirement value for a vertex
pair~$(s,t)$ equals a fraction~$\stretcha$ of the value of a maximum flow
from~$s$ to~$t$.
However, in all of the aforementioned \prob{DSND} approaches, each routed
commodity is considered in isolation from the others, whereas in the
\prob{MMCFS} setting, all commodities are routed simultaneously.
For example, given a digraph~$G=(V,E)$ with edge capacities~$\ecap$ and
a traffic matrix~$\demand$ routable in~$E$, the scaled-down matrix
$\mcfalpha\cdot \demand$ is not necessarily routable in any optimal
\prob{MCPS}-solution of~$(G,\ecap,\stretcha)$ since some edges may
be congested by the simultaneous routing of multiple commodities.
This is especially easy to see for $\mcfalpha \geq \stretcha$ but even holds true
when $\mcfalpha \ll \stretcha$:

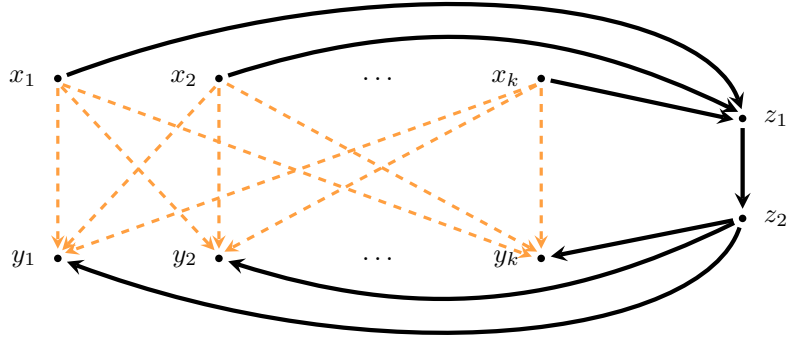
\begin{figure}[tbp]
    \centering
    \begin{tikzpicture}[scale=0.33]
        \node[label={[label distance=1mm]180:$x_1$},dot] (x1) at (0,0) {};
        \node[label={[label distance=1mm]180:$x_2$},dot] (x2) at (4,0) {};
        \node[label={[label distance=1mm]180:$x_k$},dot] (xn) at (12,0) {};
        \node[label={[label distance=1mm]180:$y_1$},dot] (y1) at (0,-4.5) {};
        \node[label={[label distance=1mm]180:$y_2$},dot] (y2) at (4,-4.5) {};
        \node[label={[label distance=1mm]180:$y_k$},dot] (yn) at (12,-4.5) {};
        \node[label={[label distance=1mm]0:$z_1$},dot] (x) at (17,-1) {};
        \node[label={[label distance=1mm]0:$z_2$},dot] (y) at (17,-3.5) {};

        \begin{scope}[every node/.style={circle,minimum size=1pt,inner sep=1pt}]
            \node (xdots) at (8,0) {$\dots$};
            \node (ydots) at (8,-4.5) {$\dots$};
        \end{scope}

        \begin{scope}[every path/.style={edge,ultra thick}]
            \path (x1) .. controls (8,3) and (16,2) .. (x.80);
            \path (y) .. controls (16,-6.5) and (8,-7.5) .. (y1);
        \end{scope}
        \begin{scope}[every edge/.style={edge,ultra thick}]
            \path (x2) edge[bend left=22] (x.85);
            \path (xn) edge (x.210);
            \path (y.-150) edge[bend left=22] (y2);
            \path (y) edge (yn);
            \path (x) edge[ultra thick]  (y);
        \end{scope}

        \begin{scope}[every edge/.style={edge, draw=myorange, very thick, dashed}]
            \path (x1.-90) edge (y1.90); \path (x1.-90) edge (y2.100); \path (x1.-90) edge (yn.180);
            \path (x2.-90) edge (y1.80); \path (x2.-90) edge (y2.90); \path (x2.-40) edge (yn.80);
            \path (xn.-90) edge (y1.60); \path (xn.-90) edge (y2.80); \path (xn.-90) edge (yn.90);
        \end{scope}
    \end{tikzpicture}
    \caption{Digraph~$G$ constructed in the proof of
        \Cref{th:mcf_cps_relationship}.
        Edges of the unique optimal \prob{MCPS}-solution~$E'$ of
        $(G,\ecap,\stretcha)$ are solid (black), and the remaining edges~$E_{XY}$
        are dashed (orange).}
    \label{fig:mcps_mcf_not_routable}
\end{figure}

\begin{observation}\label{th:mcf_cps_relationship}
    Given an arbitrarily small $\mcfalpha \in (0,1)$ and an arbitrarily large
    $\stretcha \in (0,1)$, there exists a digraph $G=(V,E)$ with edge capacities~$\ecap$
    and a traffic matrix~$\demand$ such that $\demand$ is routable
    in~$E$, but $\mcfalpha \cdot \demand$ is not routable in any
    optimal \prob{MCPS}-solution~$E'$ of~$(G,\ecap,\stretcha)$.
\end{observation}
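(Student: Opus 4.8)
The plan is to turn the instance sketched in \Cref{fig:mcps_mcf_not_routable} into a concrete construction and verify its two defining properties. Take $k$ ``sources'' $x_1,\dots,x_k$ and $k$ ``sinks'' $y_1,\dots,y_k$, plus two bottleneck vertices $z_1,z_2$; as in the figure, let $E'$ consist of the edges $x_iz_1$, $z_1z_2$, and $z_2y_j$, let $E_{XY}$ consist of all edges $x_iy_j$, so $E=E'\cup E_{XY}$, and set $\demand(x_i,y_j)\be 1$ for all $i,j$ (and $0$ otherwise), $\ecap(x_iy_j)\be 1$, and $\ecap(x_iz_1)\be\ecap(z_1z_2)\be\ecap(z_2y_j)\be B$ for a rational $B$ fixed at the end. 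Since every $y_j$ is a sink and every $x_i$ has no in-edges, a short case check gives all max-flow values in $G$; the only interesting one is $\text{maxflow}(x_i,y_j)=1+B$, achieved by one unit on $x_iy_j$ together with $B$ units on $x_i\to z_1\to z_2\to y_j$.

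I would first show that $E'$ is precisely the set of forced edges. For each $i$, the requirement $r(x_i,z_1)=\stretcha\cdot\text{maxflow}(x_i,z_1)=\stretcha B>0$ needs positive flow into $z_1$ in any feasible \prob{MCPS} solution $S$; the only in-edges of $z_1$ are the $x_\ell z_1$, and no $x_\ell$ with $\ell\neq i$ is reachable from $x_i$, so $x_iz_1\in S$. Symmetrically each $z_2y_j$ is forced, and $z_1z_2$ is forced as the unique out-edge of $z_1$. Conversely, $E'$ is itself feasible as soon as $B\ge\stretcha/(1-\stretcha)$: every requirement other than $r(x_i,y_j)=\stretcha(1+B)$ is satisfied immediately inside $E'$, while the latter must be routed along the unique $x_i$--$y_j$ path of $E'$, whose bottleneck is $B$, so feasibility here amounts to $\stretcha(1+B)\le B$, i.e.\ to $B\ge\stretcha/(1-\stretcha)$. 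Hence, once that bound holds, $E'$ (with its $2k+1$ edges) is the \emph{unique} optimal \prob{MCPS} solution of $(G,\ecap,\stretcha)$.

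It remains to choose the parameters. Routability of $\demand$ in $E$ is immediate: send each unit of demand along its direct edge $x_iy_j$. In $E'$, however, every $x_i$--$y_j$ path traverses $z_1z_2$, so $\mcfalpha\demand$ would place $k^2\mcfalpha$ units of flow on that single edge, and is therefore \emph{not} routable in $E'$ whenever $B<k^2\mcfalpha$. The two bounds on $B$ are jointly attainable exactly when $\stretcha/(1-\stretcha)<k^2\mcfalpha$, which, for any fixed $\mcfalpha,\stretcha\in(0,1)$, holds once $k>\sqrt{\stretcha/(\mcfalpha(1-\stretcha))}$; I would fix such a $k$, pick a rational $B\in\bigl[\stretcha/(1-\stretcha),\,k^2\mcfalpha\bigr)$, and the instance is complete.

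The heart of the argument---and the one step demanding genuine care---is this last balancing act: a single source--sink pair cannot separate the two regimes, since forcing $E'$ requires $B\ge\stretcha/(1-\stretcha)$ while the infeasibility of $\mcfalpha\demand$ requires $B<\mcfalpha<1$, and these clash already for $\stretcha\ge 1/2$. The bipartite fan is exactly what breaks this deadlock: routing $k^2$ simultaneous commodities through the lone edge $z_1z_2$ amplifies the congestion gap by a factor of $k^2$ over the per-commodity flow that \prob{MCPS} is required to preserve, so taking $k$ large enough makes the admissible window for $B$ non-empty for every $\mcfalpha$ and every $\stretcha$.
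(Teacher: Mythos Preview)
Your argument is correct and follows essentially the same construction and reasoning as the paper's proof: the identical bipartite-plus-bottleneck graph, the same verification that $E'$ is forced and (for large enough capacity on the bottleneck path) feasible, and the same congestion count on $z_1z_2$. The only cosmetic difference is that you work with the tight feasibility threshold $B\ge\stretcha/(1-\stretcha)$ and then pick $B$ in the resulting interval, whereas the paper fixes the concrete integer $C=\lceil 2\stretcha/(1-\stretcha)\rceil$ up front; both choices lead to the same conclusion.
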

\begin{proof}
    Let $C \be \ceil*{\frac{2\stretcha}{1-\stretcha}}$ be a (high) edge capacity
    value and $k > \sqrt{\frac{C}{\mcfalpha}}$ a number of vertices.
    Construct~$G=(V,E)$ (visualized in \Cref{fig:mcps_mcf_not_routable}) as
    follows:
    Create two vertex sets~$X$ and~$Y$ with $k$~vertices each, as well as
    two distinct vertices~$z_1, z_2$, and let $V \be X \union Y \union
    \{z_1,z_2\}$.
    Moreover, let~$E_{XY} \be X \times Y$, $E' \be (X \times \{z_1\}) \union
    \{z_1z_2\} \union (\{z_2\} \times Y)$, and $E \be E_{XY} \union E'$.
    Edge capacities and the traffic matrix are chosen as follows:
    \begin{align*}
        \ecap(uv) &= \begin{cases}
            1 & \text{if } uv \in E_{XY},\\
            C & \text{otherwise;}
        \end{cases}&
        \demand(u,v) &= \begin{cases}
            1 & \text{if } uv \in E_{XY},\\
            0 & \text{otherwise.}
        \end{cases}
    \end{align*}
    Every non-zero demand~$\demand(u,v)$ can be routed in $(G,\ecap)$ using the respective edge
    $uv \in E_{XY}$.

    The only optimal \prob{MCPS}-solution of~$(G,\ecap,\stretcha)$ is~$E'$:
    For every vertex pair $(u,v) \in E'$, the only $u$-$v$-path in~$G$ is the
    one consisting of the edge~$uv$---so the edges $E'$ must be in the solution.
    $E'$ also establishes a maximum flow of sufficient value for the remaining vertex pairs.
    In particular, for every vertex pair~$(u,v) \in X \times Y$, there exists a
    maximum $u$-$v$-flow of value~$C$ in $E'$, which is at least
    $\stretcha$~times the value $(C+1)$ of a maximum $u$-$v$-flow in $E$:
    \begin{align*}
        C = \ceil*{\frac{2\stretcha}{1-\stretcha}}
        \geq \stretcha\cdot\frac{2\cdot(1+\stretcha-\stretcha)}{1-\stretcha}
        = \stretcha \left(\frac{2\stretcha}{1-\stretcha} +2\right)
        \geq \stretcha \left(\ceil*{\frac{2\stretcha}{1-\stretcha}} +1\right)
        = \stretcha \cdot (C+1)
    \end{align*}
    However, the scaled matrix~$\mcfalpha \cdot \demand$ is not routable in $E'$:
    $|X|\cdot|Y| = k^2 > \frac{C}{\mcfalpha}$ many demands of value $\mcfalpha$
    would have to be routed over the single edge~$z_1z_2$, exceeding its
    capacity~$C$.
\end{proof}

Lastly, we want to highlight similarities of \prob{MMCFS} to the well-established
NP-hard \probf{MED}
problem~\cite{DBLP:journals/siamcomp/AhoGU72,DBLP:journals/siamcomp/Sahni74,DBLP:books/fm/GareyJ79,DBLP:conf/coco/Karp72},
where, given a digraph~$G$, one asks for the edge-wise minimum subgraph of~$G$
that preserves the reachability relation of~$G$.
In \Cref{sec:complexity}, we show that \prob{MED} is a special case of \prob{MMCFS}.
Further,
we observe:
\begin{observation}\label{th:med_in_mcfs}
    In a simple \ac{DAG}~$G$, the unique \ac{MED} of~$G$ must be contained in
    every feasible \prob{MMCFS} solution of~$G$, regardless of edge capacities
    and~$\mcfalpha$.
\end{observation}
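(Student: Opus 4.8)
The plan is to first describe combinatorially which edges of~$G$ belong to the \ac{MED}~$\med$, and then, for each such edge, exhibit a single traffic matrix that forces it into every feasible \prob{MMCFS} solution. For the first part I would use that in a simple \ac{DAG} the unique \ac{MED}~$\med$ is precisely the transitive reduction of~$G$, so that an edge~$uv\in E$ lies in~$\med$ if and only if the only directed $u$-$v$-path in~$G$ is the single edge~$uv$ (equivalently, $v$ becomes unreachable from~$u$ after deleting~$uv$): if some $u$-$v$-path of length at least two existed then~$uv$ would be redundant for the reachability relation and could be removed, while if no such path exists then every reachability-preserving subgraph of~$G$---in particular~$\med$---must contain~$uv$.

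Now fix an arbitrary edge~$uv\in\med$ and an arbitrary feasible \prob{MMCFS} solution~$A\subseteq E$, and assume towards a contradiction that $uv\notin A$. Consider the traffic matrix~$\demand$ with $\demand(u,v)\be\ecap(uv)>0$ and all other entries equal to~$0$; it is routable in~$E$ by sending $\ecap(uv)$ units along the edge~$uv$. By feasibility of~$A$, the scaled matrix~$\mcfalpha\cdot\demand$ is routable in~$A$, so there is an \ac{MCF}~$\flowset$ for~$(G,\ecap,\mcfalpha\cdot\demand)$ with $\sum_{\flow\in\flowset}\flow(e)=0$ for all~$e\notin A$. Summing the flow-conservation constraints over all commodities, the aggregated flow~$\flowsum\be\sum_{\flow\in\flowset}\flow$ is a non-negative flow that sends $\mcfalpha\cdot\ecap(uv)>0$ units from~$u$ to~$v$ and vanishes on every edge outside~$A$. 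A flow decomposition of~$\flowsum$ into paths and cycles then yields a directed $u$-$v$-path all of whose edges lie in~$A$, hence a $u$-$v$-path avoiding~$uv$---contradicting the characterisation above. Therefore $uv\in A$, and since $uv\in\med$ and~$A$ were arbitrary, $\med\subseteq A$ for every feasible \prob{MMCFS} solution, as claimed.

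The only somewhat delicate points lie in the first step---proving the combinatorial characterisation of~$\med$, and the uniqueness of the \ac{MED} for simple \acp{DAG}---and in making the passage to the aggregated flow rigorous: the routability definition only forces the \emph{sum} of the commodity flows to vanish outside~$A$, and it is the non-negativity of flows that then lets~$\flowsum$ be decomposed into $u$-$v$-paths lying entirely within~$A$. Once this is in place the routing contradiction is immediate, and it uses nothing about the edge capacities or about~$\mcfalpha\in(0,1)$ beyond positivity---which is exactly why the claim holds regardless of edge capacities and~$\mcfalpha$.
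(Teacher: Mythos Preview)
Your proof is correct and follows essentially the same approach as the paper: characterise the \ac{MED} of a simple \ac{DAG} as those edges~$st$ with no alternative $s$-$t$-path (citing Aho--Garey--Ullman), then observe that any feasible \prob{MMCFS} solution must contain such an edge since otherwise no positive $s$-$t$-flow could be routed. The paper compresses the second step into a single sentence, whereas you spell out the witnessing traffic matrix and the flow-decomposition argument explicitly, but the underlying idea is identical.
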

\begin{proof}
    In any simple \ac{DAG}~$G$, the \ac{MED} is unique and consists of exactly
    those edges $st$ for which there is no $s$-$t$-path in $G -
    st$~\cite{DBLP:journals/siamcomp/AhoGU72}.
    A feasible \prob{MMCFS} solution must also contain these edges~$st$ in order to
    allow for a non-zero amount of flow from~$s$ to~$t$.
\end{proof}
However, in general digraphs, a feasible \prob{MMCFS} solution may not always
contain the \ac{MED}, see \Cref{fig:mcfs_may_not_contain_med}.
Note that \prob{MED} is not only polynomial-time solvable on \acp{DAG}, but
there are also several polynomial approximation algorithms for general
graphs~\cite{DBLP:journals/siamcomp/KhullerRF95,DBLP:journals/ipl/ZhaoNI03} with
the currently best approximation ratio
being~1.5~\cite{DBLP:conf/wads/BermanDK09,DBLP:conf/soda/Vetta01}.

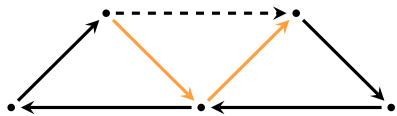
\begin{figure}[tbp]
    \centering
    \begin{tikzpicture}[scale=1]
            \begin{scope}[every node/.style={dot}]
                \node (a) at (0,1) {};
                \node (b) at (3,1) {};
                \node (c) at (0,0) {};
                \node (d) at (1.5,0) {};
                \node (e) at (3,0) {};
            \end{scope}

            \begin{scope}[every edge/.style={edge}]
                \path (a) edge[dashed] (b);
                \path (b) edge (e);
                \path (e) edge (d);
                \path (d) edge (c);
                \path (c) edge (a);
                \path (a) edge[myorange] (d);
                \path (d) edge[myorange] (b);
            \end{scope}
    \end{tikzpicture}
    \caption{An \prob{MMCFS} instance with an optimal solution (given by the
        solid edges) that does not contain the \ac{MED}. All edge capacities
        are~1, and $\mcfalpha = \frac{1}{2}$. The \ac{MED}, which is unique in
        this example and drawn in black, is not a feasible \prob{MMCFS} solution:
        for $\demand$ with $\demand(s,t) = \ecap(st)$ if $e=st \in E$
        and 0  otherwise, the dashed edge would have to accommodate a
        flow of~$\frac{3}{2}$ to satisfy all demands, but it only has a capacity of~1.}
    \label{fig:mcfs_may_not_contain_med}
\end{figure}

\section{Structural Results}\label{sec:structural}

We present some structural insights concerning \prob{MMCFS} that give a deeper
understanding of the problem.
Most importantly, we give a reformulation of \prob{MMCFS} that is used
throughout the rest of the paper to obtain structural and algorithmic results:
Recall that a feasible solution~$A$ for a given \prob{MMCFS} instance~$(G=(V,E),
\ecap, \mcfalpha)$ is defined as an edge set~$A \subseteq E$ such that for
\emph{all} traffic matrices~$T$ that are routable in~$E$,
$\mcfalpha\cdot\demand$ is routable in~$A$.
Interestingly, instead of explicitly considering all routable traffic
matrices~$T$, it suffices to consider the single specific traffic
matrix~$\alldemand$, which forces each edge to be utilized to its full capacity
but has no demands between non-adjacent vertices:
\begin{equation*}
    \alldemand(s,t) \be
    \begin{cases*}
        \ecap(e) & if $e=st \in E$, \\
        0 & otherwise.
    \end{cases*}
\end{equation*}
We show that an edge set~$A \subseteq E$ is a feasible \prob{MMCFS} solution iff
$\mcfalpha\cdot\alldemand$ is routable in $A$:

\begin{theorem}\label{th:super_t}
    Given a digraph~$G=(V,E)$ with edge capacities~$\ecap$, a retention
    ratio~$\mcfalpha\in(0,1)$, and an edge set~$A \subseteq E$, the following
    statements are equivalent:
    \begin{itemize}
        \item For all traffic matrices~$T$ that are routable in~$E$, the scaled
            matrix~$\mcfalpha\cdot\demand$ is routable in~$A$.
        \item The scaled matrix~$\mcfalpha\cdot\alldemand$ is routable in $A$.
    \end{itemize}
\end{theorem}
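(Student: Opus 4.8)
The plan is to prove the two implications separately. The downward direction (first bullet $\Rightarrow$ second bullet) is essentially free: $\alldemand$ is itself routable in $E$, since each demand $\alldemand(s,t)=\ecap(st)$ can be sent along the single edge $st$, which saturates every edge exactly to capacity and uses no edge outside $E$. Applying the hypothesis to this particular matrix $\demand=\alldemand$ then immediately yields that $\mcfalpha\cdot\alldemand$ is routable in $A$.

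All the work is in the converse. Assume $\mcfalpha\cdot\alldemand$ is routable in $A$ and fix a witnessing \ac{MCF} $\{g_{s,t}\}$. First I would normalize it: decomposing each $g_{s,t}$ into paths and cycles and discarding the cycles yields nonnegative flows of the same value whose pointwise sum only decreased, so the capacity bounds still hold; and now $\sum_{(s,t)}g_{s,t}(e)=0$ for $e\notin A$ together with nonnegativity forces each individual $g_{u,v}$ to vanish outside $A$. Thus for every edge $uv\in E$ we obtain a nonnegative flow $g_{u,v}$ of value $\mcfalpha\cdot\ecap(uv)$ from $u$ to $v$ that lives entirely inside $A$, with $\sum_{uv\in E}g_{u,v}(e)\le\ecap(e)$ for all $e$. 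The intuition is that $g_{u,v}$ is a \emph{template} telling us how to reroute one ``capacity's worth'' of traffic across the conceptual edge $uv$ using only surviving edges, at the price of the factor $\mcfalpha$.

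Now take an arbitrary traffic matrix $\demand$ routable in $E$ with witnessing \ac{MCF} $\{f_{s,t}\}$, again taken nonnegative without loss of generality so that $\sum_{(s,t)}f_{s,t}(uv)\le\ecap(uv)$ for every $uv\in E$. The key construction is to reroute each $f_{s,t}$ through the templates: set
\[
    h^{s,t} \be \sum_{uv\in E}\frac{f_{s,t}(uv)}{\ecap(uv)}\,g_{u,v}.
\]
Then I would verify two facts, both of which are linear bookkeeping. First, each $h^{s,t}$ is a flow of value $\mcfalpha\cdot\demand(s,t)$ from $s$ to $t$ supported on $A$: the summand $\tfrac{f_{s,t}(uv)}{\ecap(uv)}g_{u,v}$ is a flow of value $\mcfalpha\, f_{s,t}(uv)$ from $u$ to $v$, and on summing over $uv\in E$ the surplus/deficit at each vertex $w$ telescopes, by linearity, to $\mcfalpha$ times the net flow of $f_{s,t}$ at $w$, i.e.\ exactly the source/sink pattern of $\mcfalpha\cdot\demand(s,t)$. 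Second, the family $\{h^{s,t}\}$ respects capacities: exchanging the order of summation,
\[
    \sum_{(s,t)\in V^2} h^{s,t}(e)=\sum_{uv\in E}\frac{g_{u,v}(e)}{\ecap(uv)}\sum_{(s,t)\in V^2}f_{s,t}(uv)\le\sum_{uv\in E}g_{u,v}(e)\le\ecap(e),
\]
where the first inequality uses $g_{u,v}(e)\ge 0$ and $\sum_{(s,t)}f_{s,t}(uv)\le\ecap(uv)$. Hence $\{h^{s,t}\}$ is an \ac{MCF} routing $\mcfalpha\cdot\demand$ in $A$, as required.

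The only step that needs genuine care is the composition in the first fact: that substituting the template $g_{u,v}$ for each ``edge use'' $f_{s,t}(uv)$ really preserves flow conservation and does not leak flow onto edges outside $A$. This is precisely why the up-front normalization of both \acp{MCF} to nonnegative flows (via path/cycle decomposition) is worth doing: it simultaneously makes the support claim for each $g_{u,v}$ trivial and makes the sign in the capacity estimate go the right way. Everything else is just rearranging finite sums.
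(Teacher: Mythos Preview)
Your proof is correct and follows essentially the same rerouting-through-templates idea as the paper: use a witnessing \ac{MCF} for $\mcfalpha\cdot\alldemand$ in $A$ to tell you how to replace each unit of load on an edge $uv$ by a flow inside $A$, then push the original routing of $\demand$ through these templates. The only cosmetic difference is that the paper interposes the auxiliary matrix $\demand'(u,v)\coloneqq\flowsum^{\demand}(uv)$ and appeals to $\demand'\le\alldemand$, whereas you scale the template $g_{u,v}$ directly by $f_{s,t}(uv)/\ecap(uv)$; unwinding the paper's construction with the natural choice $f^{\mcfalpha\demand'}_{u,v}=\tfrac{\flowsum^{\demand}(uv)}{\ecap(uv)}\,g_{u,v}$ shows the two formulas coincide, and your version has the mild advantage of making the capacity check and the nonnegativity/support issues explicit.
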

\begin{proof}
    $\alldemand$ is routable in $E$ by definition.
    If every traffic matrix routable in $E$ is also routable in~$A$ when scaled
    down by~$\mcfalpha$, then so is $\mcfalpha\cdot\alldemand$.
    For the other direction, consider any arbitrary traffic matrix~$T$ routable
    in~$E$.
    Let $\{\flow^{\demand}_{s,t} \ssep (s,t) \in V^2\}$ be the \ac{MCF} that
    routes~$\demand$ in~$E$ with the vector $\flowsum^\demand \be
    \sum_{(s,t) \in V^2} \flow^\demand_{s,t}$ specifying the total flow over
    each edge.
    Using this \ac{MCF}, we can construct a new traffic matrix~$\demand'$:
    \begin{equation*}
        \demand'(s,t) \be
        \begin{cases*}
            \flowsum^\demand(st) & if $e=st \in E$, \\
            0 & otherwise.
        \end{cases*}
    \end{equation*}

    Observe that $\demand'\leq\alldemand$ when using component-wise comparison.
    Thus, since $\mcfalpha\cdot\alldemand$ is routable in $A$, so is $\mcfalpha\cdot\demand'$.
    But if $\mcfalpha\cdot\demand'$ is routable in~$A$ using the
    flows~$\{\flow^{\mcfalpha\cdot\demand'}_{u,v} \ssep (u,v) \in V^2\}$, then
    $\mcfalpha\cdot\demand$ is also routable in~$A$ using the
    flows~$\{\flow^{\mcfalpha\cdot\demand}_{s,t} \ssep (s,t) \in V^2\}$
    constructed as follows:
    for each commodity~$st \in E$ and each edge~$uv\in E$, calculate the
    fraction of flow routed over $uv$ that is used by $\flow^\demand_{s,t}$, and
    route this fraction over the path chosen by
    $\flow^{\mcfalpha\cdot\demand'}_{u,v}$.
    In short,
    \begin{equation*}
        \flow^{\mcfalpha\cdot \demand}_{s,t}(e) \be
            \sum_{uv\in E \colon \flowsum^\demand(uv) > 0}
            \frac{\flow^\demand_{s,t}(uv)}{\flowsum^\demand(uv)}\cdot\flow^{\mcfalpha\cdot\demand'}_{u,v}(e). \qedhere
    \end{equation*}
\end{proof}

We note that a related but slightly different concept to
\Cref{th:super_t} has been implicitly used previously in
\cite{DBLP:conf/focs/Racke02} (on undirected graphs).
From this point onwards, we may refer to edges as commodities since
$\alldemand$ specifies a non-zero demand precisely for the edges in~$E$.
Moreover, given a flow~$\flow_{\comm}$ for a commodity~$\comm \in E$, we
call~$\flow_{\comm}(\comm)$ the \emph{direct flow} for~$\comm$.
We observe that in an optimal \prob{MMCFS} solution~$A$,
for every commodity~$\comm\in A$, the demand~$\alldemand(\comm)$ can always be
fully satisfied by a direct flow~$\flow_{\comm}(\comm)$:

\begin{observation}\label{th:direct_flow}
    Let~$G=(V,E)$ be a digraph with edge capacities~$\ecap$
    and~$\demand$ a traffic matrix routable in an edge set~$A \subseteq E$ with
    $\demand(e) \leq \ecap(e)$ for all edges $e \in E$.
    Then, there exists an \ac{MCF}~$\flowset = \{\flow_{s,t} \ssep (s,t) \in
    V^2\}$ in the graph~$G'\be(V,A)$ satisfying the demands~$\demand$ such that
    $\flow_{\comm}(\comm) = \demand(\comm)$ for all edges $\comm\in A$.
\end{observation}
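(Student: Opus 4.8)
The plan is to take an arbitrary \ac{MCF} $\flowset$ that routes $\demand$ in $G'=(V,A)$ — one exists because $\demand$ is routable in $A$ — and to rewire it commodity by commodity until every edge $\comm\in A$, viewed as a commodity, satisfies $\flow_\comm(\comm)=\demand(\comm)$; commodities $(s,t)$ with $st\notin A$ are just carried along. Throughout we may assume each commodity flow is acyclic, since cancelling circulations only lowers arc loads; this already handles every $\comm\in A$ with $\demand(\comm)=0$, for which then $\flow_\comm\equiv 0$. So fix $\comm=st\in A$ with $\demand(\comm)>0$ and set $g\coloneqq\flow_\comm(\comm)\le\demand(\comm)$. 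Decomposing the acyclic flow $\flow_\comm$ into simple $s$-$t$ path-flows and using that a simple $s$-$t$ path of length at least two cannot contain the arc $st$, the portion of $\flow_\comm$ running directly along $st$ has value $g$, and the rest assembles into a flow $\theta$ of value $\demand(\comm)-g$ from $s$ to $t$ whose support avoids $st$.

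The key step turns $\flow_\comm$ into a pure direct flow without creating an overload. Let $M$ be the total flow that commodities \emph{other than} $\comm$ currently route over the arc $st$; feasibility of the current \ac{MCF} gives $g+M\le\ecap(st)$. Put $\lambda\coloneqq\min\bigl(M,\ \demand(\comm)-g\bigr)$ and perform two reroutes at once: (i) move $\lambda$ units of that foreign flow off of $st$ onto $\tfrac{\lambda}{\demand(\comm)-g}\cdot\theta$ — for each affected commodity this merely substitutes an $s$-to-$t$ path for an $s$-to-$t$ traversal of $st$, so its flow conservation is preserved; and (ii) push all of $\theta$ onto the arc $st$, so that $\flow_\comm$ becomes the direct flow of value $\demand(\comm)$. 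Then I check that the result is again a feasible \ac{MCF} routing $\demand$: on every arc $e\neq st$ the net change equals $\bigl(\tfrac{\lambda}{\demand(\comm)-g}-1\bigr)\theta(e)\le 0$, so no other arc's load grows, and the new load on $st$ is $\demand(\comm)$ when $\lambda=M$ and $g+M$ when $\lambda=\demand(\comm)-g$, hence at most $\max\bigl(\demand(\comm),\,g+M\bigr)\le\ecap(st)$. This last inequality is the only place the hypothesis $\demand(e)\le\ecap(e)$ is used.

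Finally, I would show that performing this step for the $|A|$ edge-commodities in turn finishes the proof, carrying the invariant that every already-processed commodity is routed purely along its own arc. When $\comm=st$ is processed, the only flows touched are $\flow_\comm$ and the foreign flow lying on $st$; an already-processed commodity $\comm'$ uses only the arc $\comm'\neq st$, so it contributes nothing to that foreign flow and $\flow_{\comm'}$ is left untouched, and since the step never raises the load of any arc other than $st$ (in particular not of $\comm'$) the equality $\flow_{\comm'}(\comm')=\demand(\comm')$ persists. Hence after all $|A|$ steps $\flow_\comm(\comm)=\demand(\comm)$ for every $\comm\in A$. The main obstacle is precisely this capacity bookkeeping for the arc $st$: pushing a commodity onto its own arc in isolation could overload that arc by up to $\demand(\comm)-g$, which is why the reroute must be two-sided and why $\demand(\comm)\le\ecap(st)$ is needed; a minor technicality is that reroute (i) can create circulations in other commodities' flows, which are cancelled before the next iteration — harmless, as this only lowers arc loads and never disturbs a commodity that is already purely direct.
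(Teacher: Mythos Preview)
Your proof is correct. The capacity bookkeeping is sound: the two-sided reroute ensures the load on $st$ ends up at $\demand(\comm)+M-\lambda\le\max(\demand(\comm),\,g+M)\le\ecap(st)$, while every other arc sees a net change of $(\tfrac{\lambda}{\demand(\comm)-g}-1)\theta(e)\le 0$; and the invariant argument correctly observes that an already-direct commodity $\comm'$ contributes nothing on $st\neq\comm'$ and hence is untouched.

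Your route differs from the paper's in structure, though not in the underlying idea. The paper argues non-constructively: among all \acp{MCF} routing $\demand$ in $A$, pick one maximizing the sum $\sum_{\comm}\flow_\comm(\comm)$ of direct flows; if some $\flow_{e'}(e')<\demand(e')$, then the arc $e'$ must be saturated (otherwise the direct flow could be increased), so some other commodity $e''$ uses $e'$, and swapping an $\varepsilon$ of $e'$-flow on an alternative path with an $\varepsilon$ of $e''$-flow on $e'$ strictly increases the sum of direct flows --- a contradiction. This extremal argument is shorter because it only needs a \emph{local} improvement and lets compactness do the rest; it never has to track global capacity changes or maintain an invariant across iterations. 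Your argument is explicitly constructive: you process the edge-commodities one by one and perform a \emph{full} reroute at each step, which forces you to verify that no other arc overloads and that previously fixed commodities stay fixed. What you gain is an actual algorithm (and a transparent accounting of where the hypothesis $\demand(e)\le\ecap(e)$ enters); what the paper gains is brevity. Both proofs hinge on the same swap --- replacing foreign flow on $st$ by the alternative $s$--$t$ routing $\theta$ --- but package it differently.
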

\begin{proof}
    Among all \acp{MCF} that witness the routability of $T$ in $A$,
    let~$\flowset'=\{\flow'_{s,t} \ssep (s,t) \in V^2\}$ be one with a maximum
    sum of direct flow values~$\sum_{\comm \in E} \flow'_{\comm}(\comm)$.

    We give a proof by contradiction:
    Assume that there exists an edge~$e'=uv \in A$ such that $\flow'_{e'}(e') <
    \demand(e')$ (if no such edge exists, $\flowset = \flowset'$ and we are
    done).
    There must exist at least one alternative $u$-$v$-path~$P$ that routes at
    least some of the remaining demand~$\demand(e') - \flow'_{e'}(e')$.
    Further, the edge~$e'$ has residual capacity $\ecap(e') -
    \sum_{(s,t)\in V^2 : }\flow'_{s,t}(e') = 0$ as otherwise we could increase
    $\flow_{e'}(e')$ (and decrease flow along $P$ accordingly), which would
    contradict the selection of~$\flowset'$.
    Thus, there exists an edge~$e'' \in E$, $e'' \neq e'$, with $\flow_{e''}(e')
    > 0$.
    But then, we can exchange a non-zero amount~$\varepsilon > 0$ of flow of
    commodity~$e'$ routed over~$P$ with an equally small amount of flow of
    commodity~$e''$ routed over~$e'$.
    This increases $\flow'_{e'}(e')$ without decreasing any other direct flow
    value---again a contradiction to the selection of $\flowset'$.
\end{proof}

Further, for any edge set~$A$, we can compare its total edge capacity~$\sum_{e
\in A} \ecap(e)$ to the total flow needed to satisfy the demands~$\alldemand$.
This not only gives us a necessary condition for an edge set~$A$ to be a
feasible \prob{MMCFS} solution, but, upon closer analysis, also allows us to
relate its quality as a solution to its \emph{mean capacity}~$\meancap{A} \be
\frac{1}{|A|}\cdot\sum_{e \in A} \ecap(e)$.
The following results apply both in the case of simple and non-simple graphs,
but we can give better guarantees in the former case.

\begin{theorem}
    Let~$O$ be an optimal solution and $A\neq O$~a feasible solution for an
    \prob{MMCFS} instance~$(G,\ecap,\mcfalpha)$.
    Then, $\frac{|A|}{|O|} \leq \min\left\{
    \left(1+\frac{1-\mcfalpha}{\theta\mcfalpha}\right) \cdot
    \frac{\meancap{O}}{\ \meancap{A}\ }, 1 + \frac{1-\mcfalpha}{\theta\mcfalpha}
\cdot\frac{\meancap{O}}{\ \meancap{A\setminus O}\ }\right\}$
    with $\theta = 2$ if $G$ has no parallel edges and $\theta = 1$ otherwise.
\end{theorem}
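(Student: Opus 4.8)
The plan is to reduce everything to a single necessary condition for feasibility, proved by a flow-counting estimate, after which both bounds follow by elementary algebra. For $F\subseteq E$ write $w(F)\be\sum_{e\in F}\ecap(e)$, so that $w(F)=|F|\cdot\meancap{F}$, and abbreviate $\rho\be\frac{1-\mcfalpha}{\theta\mcfalpha}>0$. The key claim I would establish first is: \emph{every feasible \prob{MMCFS} solution $F$ satisfies $w(E)\le(1+\rho)\,w(F)$.} To prove it, I would use \Cref{th:super_t} to route $\mcfalpha\cdot\alldemand$ in $F$, and \Cref{th:direct_flow} to choose such an \ac{MCF} $\flowset$ with full direct flows $\flow_{\comm}(\comm)=\mcfalpha\,\ecap(\comm)$ for every commodity $\comm\in F$. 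Let $\flowsum(e)$ denote the total flow on $e$. Summing the congestion constraint over $F$ gives $\sum_{e\in F}\flowsum(e)\le w(F)$. For the matching lower bound, split $\sum_{e\in F}\flowsum(e)=\sum_{\comm\in E}\sum_{e\in F}\flow_{\comm}(e)$: a commodity $\comm\in F$ contributes at least its direct flow $\mcfalpha\,\ecap(\comm)$, while a commodity $\comm=st\in E\setminus F$ routes its demand $\mcfalpha\,\ecap(\comm)$ along $s$-$t$-paths contained in $F$, each of which uses at least $\theta$ edges (when $G$ has no parallel edges, the unique edge $st$ is $\comm\notin F$, so any such path has at least two edges), hence $\comm$ contributes at least $\theta\mcfalpha\,\ecap(\comm)$. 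Combining, $w(F)\ge\mcfalpha\sum_{\comm\in F}\ecap(\comm)+\theta\mcfalpha\sum_{\comm\in E\setminus F}\ecap(\comm)=\mcfalpha\,w(F)+\theta\mcfalpha\bigl(w(E)-w(F)\bigr)$, which rearranges to $w(E)\le(1+\rho)\,w(F)$.

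Granting the claim, the first bound follows by rewriting $\frac{\meancap{O}}{\meancap{A}}=\frac{w(O)/|O|}{w(A)/|A|}=\frac{|A|}{|O|}\cdot\frac{w(O)}{w(A)}$: dividing the desired inequality $\frac{|A|}{|O|}\le(1+\rho)\frac{\meancap{O}}{\meancap{A}}$ by the positive quantity $\frac{|A|}{|O|}$, it is equivalent to $w(A)\le(1+\rho)\,w(O)$, and this holds because $w(A)\le w(E)$ (all capacities are positive) together with $w(E)\le(1+\rho)\,w(O)$ from the claim applied to the feasible solution $O$.

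For the second bound, I would first observe $A\setminus O\neq\emptyset$: if $A\subseteq O$ then, since $A$ is feasible and $O$ optimal, $|A|\ge|O|$, forcing $A=O$ and contradicting $A\neq O$; so $\meancap{A\setminus O}$ is well defined. Using $|A|=|A\cap O|+|A\setminus O|\le|O|+|A\setminus O|$ and $|O|\cdot\frac{\meancap{O}}{\meancap{A\setminus O}}=\frac{w(O)\,|A\setminus O|}{w(A\setminus O)}$, the target $\frac{|A|}{|O|}\le 1+\rho\,\frac{\meancap{O}}{\meancap{A\setminus O}}$ reduces to $|A\setminus O|\le\rho\,\frac{w(O)\,|A\setminus O|}{w(A\setminus O)}$, i.e.\ to $w(A\setminus O)\le\rho\,w(O)$. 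Since $A\setminus O\subseteq E\setminus O$ we have $w(A\setminus O)\le w(E)-w(O)$, which is at most $\rho\,w(O)$ by the claim applied to $O$; this finishes the argument.

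The one genuinely delicate step is the flow-counting claim, and inside it two points deserve care. First, the \ac{MCF} from \Cref{th:direct_flow} may route a commodity's flow partly along cycles; since in a flow decomposition the cycle parts contribute nonnegatively to every $\flow_{\comm}(e)$, it suffices to bound the path part, whose $s$-$t$-paths each have length at least $\theta$, so the lower bounds above remain valid. Second, the bound ``$\ge\theta$ edges per path'' for commodities in $E\setminus F$ is exactly where the parallel-edge distinction enters: with $\theta=2$ it relies on $G$ being free of parallel edges so that $st=\comm$ is the only $s$-$t$ edge and is missing from $F$, whereas for $\theta=1$ the bound is vacuous. Everything else is bookkeeping with the definitions of $w$ and $\meancap{\cdot}$.
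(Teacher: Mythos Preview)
Your proof is correct and uses essentially the same flow-counting argument as the paper: both lower-bound the total flow routed in $O$ by $\mcfalpha\,w(O)$ (from commodities in $O$) plus $\theta\mcfalpha$ times the weight of commodities outside $O$ (each routed over paths of length $\ge\theta$), and compare this to the available capacity $w(O)$. Your packaging is slightly cleaner---you isolate a standalone claim $w(E)\le(1+\rho)\,w(F)$ valid for every feasible $F$ and then specialize to $F=O$, whereas the paper works directly with the pair $(A,O)$ to obtain the weaker $\theta\mcfalpha\,w(A\setminus O)\le(1-\mcfalpha)\,w(O)$---but the substance is identical (and your use of \Cref{th:direct_flow} is convenient but not strictly necessary, since any routing of demand $\mcfalpha\,\ecap(\comm)$ along paths of length $\ge 1$ already yields the $\mcfalpha\,\ecap(\comm)$ lower bound for $\comm\in F$).
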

\begin{proof}
    Let $X \be A \setminus O$, and $Y \be A \intersect O$.
    The commodities of all~$\comm \in O$ must be routed through~$O$, requiring a
    total flow of at least~$\mcfalpha \sum_{e \in O} \ecap(e)$ and leaving a
    total remaining capacity in~$O$ of at most~$(1-\mcfalpha) \sum_{e \in O}
    \ecap(e)$.
    Every commodity~$\comm' \in X$ has to be routed within this remaining
    capacity since~$O$ is feasible.
    This requires a total flow of at least $\theta \mcfalpha \sum_{e \in X}
    \ecap(e)$; the $\theta$ is due to the fact that without parallel edges each such
    commodity $\comm'$  must be routed over at least two other edges in~$O$
    since $\comm'\not\in O$.
    We thus have
    \begin{equation}\label{eq:capacity_relation}
        \theta \mcfalpha \sum_{e \in X} \ecap(e) \leq (1-\mcfalpha) \sum_{e \in
         O} \ecap(e)~.
    \end{equation}
    By adding $\theta \mcfalpha \sum_{e \in Y} \ecap(e) \leq \theta \mcfalpha
    \sum_{e \in O} \ecap(e)$ to this inequality, we obtain
    \begin{align*}
        \theta \mcfalpha \sum_{e \in A} \ecap(e)
            &\leq (1-\mcfalpha) \sum_{e \in O} \ecap(e) + \theta \mcfalpha \sum_{e \in O} \ecap(e)\\ %
        \theta \mcfalpha \cdot |A|\cdot\meancap{A}
            &\leq (1-\mcfalpha+\theta\mcfalpha) \cdot|O|\cdot\meancap{O}\\
        \frac{|A|}{|O|}
            &\leq \left(1+\frac{1-\mcfalpha}{\theta\mcfalpha}\right) \cdot
            \frac{\meancap{O}}{\ \meancap{A}\ }~.
    \end{align*}
    Alternatively, we can rewrite inequality~\eqref{eq:capacity_relation}
    as~$\theta \mcfalpha \cdot |X|\cdot\meancap{X} \leq (1-\mcfalpha)
    \cdot|O|\cdot\meancap{O}$ and obtain
    \begin{align*}
            \frac{|A|}{|O|} \leq \frac{|O|+|X|}{|O|} = 1 + \frac{|X|}{|O|}
            &\leq 1 + \frac{1-\mcfalpha}{\theta\mcfalpha}
            \cdot\frac{\meancap{O}}{\ \meancap{X}\ }~. \qedhere
    \end{align*}
\end{proof}

\begin{corollary}\label{th:mcfs_trivial_approx}
    Any arbitrary feasible solution for \prob{MMCFS} (including the trivial one,
    $E$ itself), is a $(1+\frac{1-\mcfalpha}{\theta \mcfalpha} \cdot
    \frac{\max_{e \in E}\ecap(e)}{\min_{e \in E}\ecap(e)})$-approximation.
\end{corollary}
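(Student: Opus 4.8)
The plan is to derive this directly from the preceding theorem, and in particular from the \emph{second} term of its minimum (the first term would only yield the weaker bound $(1+\frac{1-\mcfalpha}{\theta\mcfalpha})\cdot\frac{\max_{e\in E}\ecap(e)}{\min_{e\in E}\ecap(e)}$). First I would dispose of the degenerate case: if the feasible solution $A$ is itself optimal, i.e.\ $A=O$ for an optimal solution $O$, then its approximation ratio equals $1$, which is trivially at most $1+\frac{1-\mcfalpha}{\theta\mcfalpha}\cdot\frac{\max_{e\in E}\ecap(e)}{\min_{e\in E}\ecap(e)}$, since $\mcfalpha\in(0,1)$ forces $\frac{1-\mcfalpha}{\theta\mcfalpha}>0$ and obviously $\frac{\max_{e\in E}\ecap(e)}{\min_{e\in E}\ecap(e)}\geq 1$. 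So from now on assume $A$ is feasible but not optimal, fix an optimal solution $O$; then $A\neq O$ and the theorem applies.

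Next I would check that $A\setminus O\neq\emptyset$, so that $\meancap{A\setminus O}$ is well-defined: optimality of $O$ means $|O|$ is minimum over all feasible solutions, hence $|A|\geq|O|$; if $A\subseteq O$ held, this would force $A=O$, contradicting our assumption. With $A\setminus O\neq\emptyset$ in hand, the theorem gives in particular
\[
\frac{|A|}{|O|}\;\leq\;1+\frac{1-\mcfalpha}{\theta\mcfalpha}\cdot\frac{\meancap{O}}{\ \meancap{A\setminus O}\ }.
\]

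The final step is a coarse bounding of the two mean capacities by the global extremes. Since $\meancap{O}$ is an average of the values $\ecap(e)$ over the nonempty set $O\subseteq E$, we have $\meancap{O}\leq\max_{e\in E}\ecap(e)$; likewise $\meancap{A\setminus O}\geq\min_{e\in E}\ecap(e)$, as it averages the values $\ecap(e)$ over the nonempty set $A\setminus O\subseteq E$. Hence $\frac{\meancap{O}}{\meancap{A\setminus O}}\leq\frac{\max_{e\in E}\ecap(e)}{\min_{e\in E}\ecap(e)}$, and substituting this into the displayed inequality yields the claimed bound on $\frac{|A|}{|O|}$. I do not anticipate a genuine obstacle here; the only points needing a word of care are verifying $A\setminus O\neq\emptyset$ (so the relevant mean capacity is defined), treating the case $A=O$ separately, and recalling that $E$ itself is always feasible (since $\alldemand$ is routable in $E$ by definition, so $\mcfalpha\cdot\alldemand$ is routable in $E$ as well, and \Cref{th:super_t} applies), which makes the statement non-vacuous for the trivial solution.
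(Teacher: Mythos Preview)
Your proposal is correct and is precisely the intended derivation: the paper states this as an immediate corollary of the preceding theorem without spelling out a proof, and your argument—using the second term of the minimum together with the trivial bounds $\meancap{O}\leq\max_{e\in E}\ecap(e)$ and $\meancap{A\setminus O}\geq\min_{e\in E}\ecap(e)$—is exactly how one fills in the details. Your care in handling the degenerate case $A=O$ and in verifying $A\setminus O\neq\emptyset$ is appropriate, since the theorem's hypothesis requires $A\neq O$ and the second bound's denominator must be well-defined.
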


This ratio is met, e.g., on a bundle of parallel edges with
$\ell \be (\frac{1}{\mcfalpha} -1)\cdot k$ capacity-1 edges and one capacity-$k$ edge
(for any given $\mcfalpha\in\Q_{(0,1)}$ and an arbitrary $k\in\Q$
s.t.\ $\ell \in\N$).

\begin{corollary}\label{th:mcfs_unit_cap_approx}
    For uniform capacities, any arbitrary feasible solution for \prob{MMCFS}
    (including the trivial one, $E$ itself) is a $(1+\frac{1-\mcfalpha}{\theta
    \mcfalpha})$-approximation.
\end{corollary}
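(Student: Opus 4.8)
The plan is to derive this immediately from \Cref{th:mcfs_trivial_approx}. If all edge capacities share a common value~$c$, then $\max_{e\in E}\ecap(e) = \min_{e\in E}\ecap(e) = c$, so the capacity ratio $\frac{\max_{e\in E}\ecap(e)}{\min_{e\in E}\ecap(e)}$ appearing in the approximation factor of \Cref{th:mcfs_trivial_approx} equals~$1$. Substituting this into $1 + \frac{1-\mcfalpha}{\theta\mcfalpha}\cdot\frac{\max_{e\in E}\ecap(e)}{\min_{e\in E}\ecap(e)}$ yields exactly the claimed bound $1 + \frac{1-\mcfalpha}{\theta\mcfalpha}$.

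Alternatively, one can read the statement off directly from the preceding theorem: with uniform capacities we have $\meancap{A} = \meancap{O} = c$, and likewise $\meancap{A\setminus O} = c$ whenever $A\setminus O$ is nonempty, so both bounds in the theorem collapse to $\frac{|A|}{|O|} \leq 1 + \frac{1-\mcfalpha}{\theta\mcfalpha}$. There is no genuine obstacle here; the only point worth a moment's attention is the degenerate case $A = O$, which the theorem statement formally excludes but which trivially satisfies $\frac{|A|}{|O|} = 1 \leq 1 + \frac{1-\mcfalpha}{\theta\mcfalpha}$, so the approximation claim holds for every feasible solution as stated.
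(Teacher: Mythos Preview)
Your proposal is correct and matches the paper's intent: the corollary is stated without proof precisely because it follows immediately from \Cref{th:mcfs_trivial_approx} (or directly from the preceding theorem) by noting that uniform capacities collapse the capacity ratio to~$1$. Your handling of the degenerate case $A=O$ is a nice touch that the paper leaves implicit.
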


\section{Complexity}\label{sec:complexity}

Given that even trivial \prob{MMCFS} solutions satisfy an approximation guarantee
according to \Cref{th:mcfs_trivial_approx}, one might expect \prob{MMCFS} to be
polynomial-time solvable.
However, in this section, we show that \prob{MMCFS} is NP-hard already on
\acp{DAG} and give a first inapproximability result.
We begin by proving that \prob{MMCFS} is NP-hard already with unit edge capacities
using a reduction from \prob{MED} that directly follows from \Cref{th:super_t}:

\begin{corollary}
    \prob{MED} is the special case of \prob{MMCFS} with unit edge
    capacities~$\ecap$ and $\mcfalpha \leq \frac{1}{|E|}$.
\end{corollary}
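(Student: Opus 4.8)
The plan is to combine \Cref{th:super_t} with the definition of \prob{MED}. By \Cref{th:super_t}, an edge set $A \subseteq E$ is a feasible \prob{MMCFS} solution for $(G,\ecap,\mcfalpha)$ if and only if $\mcfalpha\cdot\alldemand$ is routable in $A$. With unit capacities we have $\alldemand(s,t)=1$ exactly when $st \in E$ and $\alldemand(s,t)=0$ otherwise, so routing $\mcfalpha\cdot\alldemand$ in $A$ amounts to simultaneously sending $\mcfalpha$ flow units from $s$ to $t$, for each of the $|E|$ edges $st \in E$, using only edges of $A$ under unit capacities.

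Next I would show that this is possible if and only if $A$ is an equivalent subdigraph of $G$, i.e., if and only if for every edge $st \in E$ there is an $s$-$t$-path in $A$ (which, by transitivity of reachability, is exactly the condition that $A$ preserves the reachability relation of $G$). For the ``if'' direction, route the $\mcfalpha$ units of each commodity $\comm=st\in E$ along an arbitrary $s$-$t$-path contained in $A$; since there are exactly $|E|$ commodities and each contributes at most $\mcfalpha$ to the load of any edge, every edge carries at most $|E|\cdot\mcfalpha \le 1 = \ecap(e)$, so the resulting \ac{MCF} respects all capacities and witnesses routability in $A$. For the ``only if'' direction, any \ac{MCF} witnessing routability of $\mcfalpha\cdot\alldemand$ in $A$ routes, for each $st\in E$, a flow of value $\mcfalpha>0$ from $s$ to $t$ whose support lies in $A$ (flows are nonnegative and the total flow outside $A$ is zero); by flow decomposition this forces at least one $s$-$t$-path inside $A$.

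Combining the two equivalences, the feasible \prob{MMCFS} solutions of $(G,\ecap\equiv 1,\mcfalpha)$ are precisely the equivalent subdigraphs of $G$, so their minimum-cardinality members coincide: an \prob{MMCFS} of this instance is an \prob{MED} of $G$ and vice versa, which is the claimed special-case relationship (and, via the NP-hardness of \prob{MED}, yields NP-hardness of \prob{MMCFS} with unit capacities). I do not expect a genuine obstacle here; the only points that need care are the roles of the two bounds on $\mcfalpha$: the inequality $\mcfalpha\le\frac{1}{|E|}$ is exactly what lets the $|E|$ unit-demand commodities be routed simultaneously without violating the unit capacities (the ``if'' direction), while $\mcfalpha>0$ is what makes routability of a commodity imply the corresponding reachability (the ``only if'' direction).
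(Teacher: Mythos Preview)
Your proposal is correct and follows essentially the same approach as the paper: invoke \Cref{th:super_t} to reduce feasibility to routability of $\mcfalpha\cdot\alldemand$, then argue that with unit capacities and $\mcfalpha\le\frac{1}{|E|}$ this is equivalent to having an $s$-$t$-path in $A$ for every edge $st\in E$. Your write-up is somewhat more explicit about the two directions of the latter equivalence (in particular spelling out the flow-decomposition step for ``only if''), but the underlying argument is identical.
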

\begin{proof}
    An optimal solution~$A\subseteq E$ for an \prob{MMCFS}
    instance~$(G,\ecap,\mcfalpha)$ with unit edge capacities is an edge set of
    minimum cardinality such that the demands~$\mcfalpha\cdot\alldemand(s,t)
    =\mcfalpha\cdot\ecap(st) \leq \frac{1}{|E|}$ for each edge $st\in E$
    are routable in~$A$.
    This is equivalent to ensuring that there exists an $s$-$t$-path in~$A$ for
    every edge~$st \in E$
    since the (unit) capacity of an edge can never be surpassed by the~$|E|$
    many flows of size at most~$\frac{1}{|E|}$ each.
\end{proof}

Moreover, we can show that \prob{MMCFS} is NP-hard already on \acp{DAG} using a
reduction from the NP-hard decision variant of \probl{SC}~\cite{DBLP:conf/coco/Karp72},
where one asks:
given a universe~$\univ$, a family of sets $\sets = \{S_i \subseteq \univ
\}_{1\leq i\leq k}$ with $k \in \bigO(\text{poly}(|\univ|))$, and a
parameter~$\varphi$, is there a subfamily $\setcover \subseteq \sets$ of
cardinality $|\setcover| \leq \varphi$ such that $\Union_{S \in \setcover} S =
\univ$?
The reduction is similar to that given in~\cite{DBLP:conf/iwoca/ChimaniI23} to prove the
NP-hardness of the \probl{MCPS} problem.

\newcommand{\Emed}{E_1}
\newcommand{\Esets}{E_{\sets}}
\newcommand{\Eitems}{E_{\univ}}
\begin{theorem}
    For any fixed~$\mcfalpha\in(0,1)$, \prob{MMCFS} is NP-hard already on
    \acp{DAG} where the longest path has length~3.
\end{theorem}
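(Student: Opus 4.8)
The plan is to reduce from the decision version of \prob{SC}. Given an instance $(\univ, \sets = \{S_1,\dots,S_k\}, \varphi)$, I would build a \ac{DAG}~$G$ on four layers of vertices so that the longest directed path has length exactly~3. The first layer is a single source-like vertex (or a constant-size gadget) whose only purpose is to create ``commodity'' demands; the second layer contains one vertex~$\sigma_i$ per set~$S_i$; the third layer contains one vertex~$u_j$ per element~$j \in \univ$; and the fourth layer is again a single sink. Following the \prob{MCPS}-hardness construction of~\cite{DBLP:conf/iwoca/ChimaniI23}, I would add an edge from the source to each~$\sigma_i$, an edge $\sigma_i u_j$ whenever $j \in S_i$, and an edge from each~$u_j$ to the sink. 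The edges incident to the source/sink layers are ``forced'': by \Cref{th:med_in_mcfs} (since $G$ is a simple \ac{DAG}), every edge~$st$ for which $st$ is the only $s$-$t$-path must appear in every feasible \prob{MMCFS} solution, so the entire first and last layer of edges, plus whatever rigid connector edges I include, are in every solution. The only genuine choice is which ``middle'' edges~$\sigma_i u_j$ to keep, and I would engineer capacities and multiplicities so that keeping edge~$\sigma_i u_j$ for even a single~$j$ effectively costs one unit for ``activating'' set~$S_i$, while each element~$j$ must be covered by at least one retained middle edge in order to route its share of $\mcfalpha\cdot\alldemand$.

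The heart of the argument is the capacity/demand calibration, and it must be done uniformly for every fixed $\mcfalpha\in(0,1)$. Using \Cref{th:super_t}, feasibility of an edge set~$A$ is equivalent to routability of the single matrix $\mcfalpha\cdot\alldemand$, where each kept edge~$e$ carries a demand of $\ecap(e)$ between its endpoints. By \Cref{th:direct_flow}, in an optimal solution each retained edge's own demand is satisfied by its direct flow, so the pressure on the construction comes entirely from routing demands whose edges were \emph{not} retained: if $\sigma_i u_j \notin A$, its demand $\mcfalpha\cdot\ecap(\sigma_i u_j)$ must be rerouted, and in a 4-layer \ac{DAG} the only alternative $\sigma_i$-$u_j$-path would go $\sigma_i \to u_{j'} \to \dots$, which does not exist — so in fact I need a slightly richer middle gadget (e.g.\ a small constant-size ``hub'' per element, or parallel copies of element-vertices) to give exactly the rerouting options that encode ``some set containing~$j$ is active.'' I would route the $\sigma_i$-demands through a chosen active set's vertex and make the $u_j$-to-sink edge the bottleneck: give it capacity tuned so that it can absorb the $\mcfalpha$-scaled contributions of \emph{all} sets containing~$j$ if and only if at least one set-activation edge into the $u_j$-region is kept. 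Scaling everything by a common integer~$k$ (as in \Cref{th:mcfs_trivial_approx}'s tight example) lets me keep capacities rational and independent of~$\mcfalpha$.

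Putting it together: I would set the target cardinality to (number of forced edges) $+$ (number of activation edges) $+\ \varphi$, and argue both directions. If $\setcover$ is a set cover of size $\le\varphi$, keep exactly the activation edges of the sets in $\setcover$ plus one middle edge per element routed to an active covering set; routability of $\mcfalpha\cdot\alldemand$ then follows by sending each non-retained $\sigma_i u_j$ demand along $\sigma_i \to (\text{its activation region}) \to u_{j}$ and verifying the bottleneck edges have capacity $\ge$ the total $\mcfalpha$-scaled load, which is where the calibration pays off. Conversely, from any feasible solution of the target size, the forced edges are fixed by \Cref{th:med_in_mcfs}, so the budget~$\varphi$ is spent entirely on activation edges; the set of activated sets must cover~$\univ$, since an uncovered element~$j$ would have no way to route the demands of the $\sigma_i u_j$ edges within the residual capacity left after each retained edge's direct flow is subtracted (\Cref{th:direct_flow}), contradicting feasibility via \Cref{th:super_t}. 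The main obstacle I anticipate is the simultaneous-routing interaction: unlike in \prob{MCPS}, all the $\sigma_i u_j$ commodities share the $u_j$-to-sink edge, so I must make that shared capacity neither so large that covering becomes unnecessary nor so small that even a valid cover fails — getting this window to be nonempty for \emph{every} fixed $\mcfalpha$, and keeping the longest path at exactly~3, is the delicate part and is exactly what the parallel-copy or hub gadget is there to control.
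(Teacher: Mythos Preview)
Your high-level instinct (reduce from \prob{SC}, exploit \Cref{th:med_in_mcfs} on a layered \ac{DAG}, use \Cref{th:super_t} to reason about a single matrix) matches the paper. But the proposal never arrives at a working construction, and the sketch as written contains a structural mismatch that a ``hub gadget'' will not fix on its own.

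The core problem is your choice of \emph{which} edges encode the cover. You make the $\sigma_i u_j$ edges the decision variables and want ``keeping $\sigma_i u_j$ for even a single $j$'' to cost one unit for activating~$S_i$. But the \prob{MMCFS} objective is edge cardinality: keeping several $\sigma_i u_j$ edges for the same~$i$ costs several units, not one, so your target $|\text{forced}| + |\text{activation}| + \varphi$ does not correspond to selecting $\varphi$ sets. Worse, you yourself observe that if $\sigma_i u_j \notin A$ there is no alternative $\sigma_i$--$u_j$ path in a layered \ac{DAG}; this means each $\sigma_i u_j$ edge is forced whenever it is the unique such path (e.g.\ when $j$ lies in only one set), and in general these edges are not the right place to put the choice. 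The paper sidesteps both issues by making the decision edges run from \emph{set} vertices directly to a common sink~$t$: there is exactly one such edge~$v_S t$ per set, so choosing it costs exactly one, and it is never forced because there is always the parallel \ac{MED} two-path $v_S \to z_S \to t$. The element side is handled by \emph{item edges}~$v_u t$ (also non-\ac{MED}, via $v_u \to v_S \to z_S \to t$), and a simple doubling of each element into $v_u,v'_u$ guarantees that routing item commodities through a set's path frees at least two item edges versus only one set edge---so an optimal solution always prefers to keep set edges and discard item edges. The calibration is then a one-liner: \ac{MED} edges get capacity~$1$, set edges get $\frac{1-\mcfalpha}{\mcfalpha}$ (so their scaled demand exactly saturates the residual $1-\mcfalpha$ of the \ac{MED} two-path if the set edge is dropped), and item edges get a tiny~$\varepsilon$ so arbitrarily many fit. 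Your proposal identifies the right difficulty (``getting this window nonempty for every fixed~$\mcfalpha$'') but does not resolve it; moving the decision to one-edge-per-set is the missing idea.
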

\begin{proof}
    Given a \probl{SC} instance~$(\univ,\sets,\varphi)$ and a fixed retention
    ratio $\mcfalpha\in(0,1)$, we construct an
    instance~$I=(G=(V,E),\ecap,\mcfalpha,\psi)$ for the decision variant of
    \prob{MMCFS}: $I$~is a yes-instance if and only if there exists a feasible
    \prob{MMCFS} solution~$E'\subseteq E$ for~$(G,\ecap,\mcfalpha)$ with
    cardinality~$|E'| \leq \psi$.
    We construct~$I$ as follows (see \Cref{fig:set_cover_to_mcfs_reduction}
    for a visualization):
    \begin{align*}
        V&\be \mathrlap{V_\univ \union V_\sets \union V^\sets_t \union \{t\}
            \text{ and } E\be \Eitems \union E_\sets \union \Emed} &&&&\\
        V_\univ &\be \{v_{u},v'_{u} \mid \forall  u \in \univ\} &
        V_\sets &\be \{v_S \mid \forall S \in \sets\} &
        V^t_\sets &\be \{z_S \mid \forall S \in \sets\}\\
        \Eitems &\be V_U \times \{t\} &
        \Esets &\be V_\sets \times \{t\} & &\\%
        \Emed &\be \mathrlap{\{v_uv_S, v'_uv_S\mid \forall S \in \sets, u \in S\} \union \{v_Sz_S,z_St \mid s \in \sets\}} \\
        \ecap(e) &\be \mathrlap{\left\{%
        1~\text{if}~e \in \Emed;\qquad
        {\textstyle \frac{1-\mcfalpha}{\mcfalpha}}~\text{if}~e \in \Esets;\qquad
        \varepsilon~\text{if}~e \in \Eitems,~\text{with}~\varepsilon \leq \min\left({\textstyle \frac{1-\alpha}{\alpha^2 \cdot |\univ|}},
        {\textstyle \frac{1-\alpha}{\alpha}}\right)
        .\right.}\\
        \psi &\be \rlap{$|\Emed| + \varphi = 2\cdot\sum_{S \in \sets} |S| + 2\cdot|\sets| + \varphi$}
    \end{align*}
    As $G$ is a \ac{DAG}, its \ac{MED} is
    unique~\cite{DBLP:journals/siamcomp/AhoGU72} and must be part of any
    feasible \prob{MMCFS} solution, see \Cref{th:med_in_mcfs}.
    This \ac{MED} is formed by the edges~$e \in \Emed$; a flow
    of~$\mcfalpha$ must be routed over each of them in order to
    satisfy the demands~$\alldemand(e) = \mcfalpha\cdot\ecap(e)$.
    The remaining capacity for each edge~$e \in \Emed$
    is~$1-\mcfalpha$.

    So consider a single set~$S\in\sets$ and the corresponding
    two-path~$\{v_Sz_S,z_St\} \in \Emed$, whose remaining capacity can thus
    accommodate either arbitrarily many \emph{item
    commodities}~$\comm_{\univ} \in \Eitems$ (each one with the sufficiently small
    demand~$\alldemand(\comm_{\univ}) = \mcfalpha\cdot\varepsilon$) or a
    single \emph{set commodity}~$\comm_{\sets} \in \Esets$ (with the
    demand~$\mcfalpha \cdot \alldemand(\comm_{\sets}) = \mcfalpha \cdot
    \frac{1-\mcfalpha}{\mcfalpha} = 1-\mcfalpha$).
    In the former case, we can remove at least two \emph{corresponding item
    edges}~$v_ut$, $v'_ut$ with $u \in S$, which is more than the single
    \emph{corresponding set edge}~$v_St$ we can remove in the latter case.
    Thus, for each item commodity $v_ut \in \Eitems$, an optimal \prob{MMCFS}
    solution must contain one of the corresponding set edges~$\{v_St \ssep S \ni
    u\} \subseteq \Esets$; the item commodity can then be routed over the path
    from~$v_u$ over~$v_S$ to~$t$.

    Given a \probl{SC} solution $\setcover$ with $|\setcover| \leq \varphi$, we can
    construct an \prob{MMCFS} solution~$E' = \Emed \union \{v_St
    \in \Esets \ssep S \in \setcover\}$ with cardinality
    $|E'| = |\Emed| + |\setcover| \leq
    |\Emed| + \varphi = \psi$.
    Since every item is covered by the sets in $\setcover$, the constructed
    \prob{MMCFS} solution includes at least one corresponding set edge for
    each item, ensuring its feasibility.
    Conversely, since a feasible \prob{MMCFS} solution~$E'$ with $|E'| \leq \psi$
    has at least one corresponding set edge for each item $u \in \univ$, the
    \probl{SC} solution $\setcover = \{S \ssep v_St \in \Esets \intersect E'\}$ also
    contains at least one covering set for each item.
    Moreover, $|\setcover| = |E'| - |\Emed| \leq \psi - |\Emed| = \varphi$.
\end{proof}

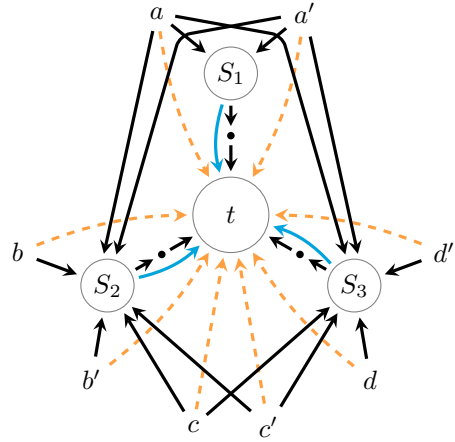
\begin{figure}[t]
\centering
\begin{tikzpicture}[scale=0.7]
        \begin{scope}[every node/.style={circle,minimum size=1pt,inner sep=1pt}]
            \node[minimum size=1cm,draw=gray] (t) at (0,0) {$t$};
            \node[dot] (s1_1) at (90:1.4cm) {};
            \node[dot] (s2_1) at (205:1.4cm) {};
            \node[dot] (s3_1) at (335:1.4cm) {};
            \node[minimum size=0.7cm,draw=gray] (s1) at (90:2.7cm) {$S_1$};
            \node[minimum size=0.7cm,draw=gray] (s2) at (205:2.7cm) {$S_2$};
            \node[minimum size=0.7cm,draw=gray] (s3) at (335:2.7cm) {$S_3$};
            \node (a)  at (130:3.8cm) {$a$};
            \node (b)  at (305:1.8cm) {$c'$};
            \node (c)  at (185:3.8cm) {$b'$};
            \node (d)  at (355:3.8cm) {$d'$};
            \node (a2) at  (50:3.8cm) {$a'$};
            \node (b2) at (235:1.9cm) {$c$};
            \node (c2) at (170:3.8cm) {$b$};
            \node (d2) at  (10:3.8cm) {$d$};
        \end{scope}

        \begin{scope}[every path/.style={edge, draw=myblue, very thick,bend right=15}]
            \path (s1) edge (t);
            \path (s2) edge[bend left=15] (t);
            \path (s3) edge (t);
        \end{scope}

        \begin{scope}[every path/.style={edge, draw=myorange, dashed, very thick,rounded corners}]
            \path[] (a) edge[bend right=10] (t);
            \path[] (b) edge[shorten <= -0.08cm] (t);
            \path[] (c) edge[bend left=10] (t);
            \path[] (d) edge[bend right=10] (t);
            \path[] (a2) edge[bend left=10] (t);
            \path[] (b2) edge (t);
            \path[] (c2) edge[bend left=10] (t);
            \path[] (d2) edge[bend right=10] (t);
        \end{scope}

        \begin{scope}[every path/.style={edge}]
            \draw (a) -- (s1); \draw (a2) -- (s1);
            \draw (a) -- (s2.100); \draw (a2) -- ($(a2)+(-2.2,0.7)$) -- ($(a2)+(-3.3,0.3)$) -- (s2);
            \draw (a) -- ($(a)+(2.2,0.7)$) -- ($(a)+(3.3,0.3)$) -- (s3); \draw (a2) -- (s3.80);
            \path (b) edge[bend right=12] (s2); \draw (b2) edge (s2);
            \path (b) edge (s3); \path (b2) edge[bend left=12] (s3);
            \draw (c) -- (s2); \draw (c2) -- (s2);
            \draw (d) -- (s3); \draw (d2) -- (s3);
            \draw (s1) -- (s1_1); \draw (s1_1) -- (t);
            \draw (s2) -- (s2_1); \draw (s2_1) -- (t);
            \draw (s3) -- (s3_1); \draw (s3_1) -- (t);
        \end{scope}
\end{tikzpicture}
\caption{\prob{MMCFS} instance constructed from a \probl{SC} instance
    with universe~$\univ = \{a,b,c,d\}$ and family of subsets~$\sets=\{\{a\},\
    \{a,b,c\},\ \{a,c,d\}\}$.
    An optimal solution contains the (solid black) \ac{MED} as well as
    one (blue) corresponding set edge for each~$u \in
    \univ$. Item edges are orange and~dashed.}
    \label{fig:set_cover_to_mcfs_reduction}
\end{figure}

\newcommand{\numaltered}{\mu}
Considering the optimization variants of \probl{SC} and \prob{MMCFS} (and thus
ignoring the additional input values $\varphi$ and~$\psi$),
the reduction above also implies the inapproximability of the number
of edges in an optimal \prob{MMCFS} solution beyond the edges required for an
\ac{MED}:
Consider an instance~$I = (G=(V,E),\ecap,\mcfalpha)$ for the optimization
variant of \prob{MMCFS} that is produced by the reduction above, and an arbitrary
feasible solution~$A \subseteq E$ for this instance.
Let $\numaltered(I,A) \be |A| - \medval(G)$ where $\medval(G)$ denotes the
number of edges in an \ac{MED} of~$G$.
Then, $A$ can be transformed into a feasible solution for the original \probl{SC}
instance with objective value $\numaltered(I,A)$ in linear time.
Further, let~$\numaltered(I)$ be the minimum~$\numaltered(I, A')$ over all
feasible solutions~$A'$ for $I$, and recall that the size~$|E|$ of the
\prob{MMCFS} instance~$I$ is linear in the size~$N \in \bigO(\text{poly}(\univ))$
of the \probl{SC} instance:
if it was possible to approximate $\numaltered(I)$ within a factor
in $o(\log|E|) = o(\log|U|)$, one
could also approximate \probl{SC} within $o(\log|U|)$, which is
NP-hard~\cite{DBLP:conf/stoc/DinurS14,DBLP:journals/toc/Moshkovitz15}.
This implies that any approximation algorithm for \prob{MMCFS} (such as the one
we present in \Cref{sec:mcfs_approx}) must leverage the existence of a
comparatively high number of \ac{MED}-edges in order to achieve its
approximation ratio.

\begin{observation}
    Given an \prob{MMCFS} instance $I = (G=(V,E),\ecap,\mcfalpha)$ and an
    \ac{MED} of $G$, approximating $\numaltered(I)$ with a ratio in $o(\log|E|)$
    is NP-hard.
    This already holds on \acp{DAG} where the longest path has length~3.
\end{observation}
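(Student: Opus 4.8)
The plan is to recycle the polynomial-time \probl{SC}-to-\prob{MMCFS} reduction from the preceding theorem and combine it with the known inapproximability of \probl{SC}. First I would observe that the reduction, applied to a \probl{SC} instance $(\univ,\sets)$, produces an \prob{MMCFS} instance $I=(G=(V,E),\ecap,\mcfalpha)$ on a \ac{DAG} whose longest path has length~$3$; since $G$ is a \ac{DAG}, its \ac{MED} is unique and computable in polynomial time, so making it part of the algorithm's input is free. Moreover $|E| = \bigO(\mathrm{poly}(|\univ|))$ and $|E| \ge |\Eitems| = 2|\univ|$, so $\log|E| = \Theta(\log|\univ|)$.

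Next I would pin down the exact correspondence between feasible \prob{MMCFS} solutions of $I$ and set covers of $(\univ,\sets)$, going beyond the optimal-solution statement already in the theorem. For the ``hardness'' direction: by the capacity argument in the proof of the previous theorem (each two-path $\{v_Sz_S,z_St\}$ has residual capacity $1-\mcfalpha$, which routes either a single set commodity or item commodities but not both), \emph{every} feasible $A\subseteq E$ must contain a corresponding set edge $v_St$ for each item $u\in\univ$. Hence $\setcover(A)\be\{S \ssep v_St\in\Esets\intersect A\}$ is a valid set cover, obtained from $A$ in linear time, with $|\setcover(A)| = |\Esets\intersect A| \le |A|-|\Emed| = |A|-\medval(G) = \numaltered(I,A)$. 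For the converse direction: a set cover $\setcover$ yields the feasible solution $\Emed\union\{v_St \ssep S\in\setcover\}$ of $\numaltered$-value exactly $|\setcover|$. Together these give $\numaltered(I) = |\setcover^*|$, where $\setcover^*$ is a minimum set cover of $(\univ,\sets)$.

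Finally I would chain the approximation guarantees. Suppose an algorithm, given $I$ and an \ac{MED} of $G$, outputs a feasible $A$ with $\numaltered(I,A)\le f(|E|)\cdot\numaltered(I)$ for some $f(n)\in o(\log n)$. Then $\setcover(A)$ is a set cover with $|\setcover(A)| \le \numaltered(I,A) \le f(|E|)\cdot|\setcover^*|$, and since $\log|E| = \Theta(\log|\univ|)$ we get $f(|E|)\in o(\log|\univ|)$, i.e.\ an $o(\log|\univ|)$-approximation for \probl{SC}, which is NP-hard~\cite{DBLP:conf/stoc/DinurS14,DBLP:journals/toc/Moshkovitz15}. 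Because the reduction already outputs \acp{DAG} with longest path length~$3$, the hardness holds on this restricted class.

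The only non-routine point is the ``hardness'' direction of the correspondence: one must ensure that feasibility of $A$ forces $|\Esets\intersect A|\ge|\setcover^*|$ for \emph{arbitrary} feasible $A$, not merely for optimal ones. This is precisely what the residual-capacity argument in the proof of the previous theorem supplies, and I would make sure to invoke it at the level of individual item commodities $v_ut$ for every feasible $A$, so that $\setcover(A)$ is a genuine set cover in all cases.
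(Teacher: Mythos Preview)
Your approach is the same as the paper's: both recycle the \probl{SC}-to-\prob{MMCFS} reduction, observe that $\log|E|=\Theta(\log|\univ|)$, and argue that any feasible $A$ yields a set cover of size at most $\numaltered(I,A)$, so an $o(\log|E|)$-approximation for $\numaltered$ would give an $o(\log|\univ|)$-approximation for \probl{SC}. The paper's own argument is just the informal paragraph preceding the Observation, and it glosses over exactly the point you flag as non-routine.

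There is, however, a small gap in how you resolve that point. You assert that \emph{every} feasible $A$ must contain a corresponding set edge $v_St$ for each item $u$, so that $\setcover(A)=\{S\ssep v_St\in\Esets\intersect A\}$ is already a cover. This is false: $A=\Emed\cup\Eitems$ is feasible (route every commodity over its own edge, and each set commodity $v_St$ of demand $1-\mcfalpha$ over the two-path $v_S\!\to\!z_S\!\to\!t$), yet it contains no set edges at all, so $\setcover(A)=\emptyset$. The residual-capacity argument from the preceding theorem only establishes the set-edge property for \emph{optimal} solutions; it explicitly uses the exchange ``remove two item edges versus one set edge'', which is an optimality, not feasibility, argument.

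The fix is easy and keeps your approach intact. What the residual-capacity argument does give for arbitrary feasible $A$ is this: if no $v_St$ with $S\ni u$ lies in $A$, then every such two-path $v_S\!\to\!z_S\!\to\!t$ is saturated by its own commodity plus the set commodity, so the item commodities $v_ut$ and $v'_ut$ have no alternative route and both edges must lie in $A$. Hence each item uncovered by $\setcover(A)$ contributes at least two edges of $\Eitems$ to $A$, and you can augment $\setcover(A)$ to a genuine cover $\setcover'(A)$ by adding one arbitrary containing set per uncovered item; then $|\setcover'(A)|\le|\Esets\intersect A|+|\{u\text{ uncovered}\}|\le|\Esets\intersect A|+|\Eitems\intersect A|=\numaltered(I,A)$. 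With this correction your chain of inequalities goes through verbatim.
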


\section{LP-based Approximation}\label{sec:mcfs_approx}

We present an extremely simple
$\max(\nicefrac{1}{\mcfalpha}, 2)$-approximation for \prob{MMCFS} based on LP
rounding.
This is a clear improvement over the default approximation guarantee of
\Cref{th:mcfs_trivial_approx}, which depends the instance's edge capacities and
is thus not even polynomially bounded in~$\mcfalpha^{-1}$ or the instance size.
Consider the following ILP formulation for \prob{MMCFS}:
\begin{ilp}{1}\label{eq:ilp_mcfs}
\begin{align}
    \min &\sum_{e \in E}x_e & \label{eq:ilp_obj}\\
    \sum_{u\colon uv \in E} \flow_{\comm}(uv) - \sum_{u \colon vu \in E} \flow_{\comm}(vu) &=
            \begin{cases*}
                -\mcfalpha\alldemand(\comm) & if $v=s$\\
                \mcfalpha\alldemand(\comm) & if $v=t$\\
                0 & else
            \end{cases*}
        &\forall v\in V, \comm\!=\!st \in E \label{eq:ilp_flow_preservation}\\
    \sum_{\comm \in E} \flow_{\comm}(e) &\leq x_e \cdot \ecap(e)
                       &\forall e\in E \label{eq:ilp_cap_constraint}\\
    \flow_{\comm}(e) &\geq 0 &\forall e \in E, \comm \in E\\
    x_e & \in \{0,1\} &\forall e \in E \label{eq:ilp_integrality_constraint}
\end{align}
\end{ilp}

A binary indicator variable~$\lpfeas_e$ determines whether edge~$e\in E$ is part
of the solution subgraph or not.
We minimize the sum of these variables in the objective
function~\eqref{eq:ilp_obj} to obtain a subgraph of minimum size.
The non-negative variables $\flow_{\comm}(e)$ determine the amount of flow routed
over edge~$e$ for commodity $\comm \in E$.
Recall that $\alldemand$ specifies a demand of $\ecap(\comm)$ precisely for each
edge~$\comm$:
the flow preservation constraints~\eqref{eq:ilp_flow_preservation} guarantee
that the $\flow$-variables represent proper flows that satisfy these demands.
Lastly, the capacity constraints~\eqref{eq:ilp_cap_constraint} ensure that the
total sum of flow over any edge~$e \in E$ does not surpass the capacity
of~$e$.
This flow sum~$\flowsum(e) \be \sum_{\comm \in E}
\flow_{\comm}(e)$ must be zero if~$e$ is not part of the solution.

We obtain the relaxation of ILP~\eqref{eq:ilp_mcfs} by replacing the integrality
constraints~\eqref{eq:ilp_integrality_constraint} on $x_e$ by the inequalities
$0 \leq x_e \leq 1$ for all $e \in E$.
The only other type of constraint that bounds the $x_e$-variables
is~\eqref{eq:ilp_cap_constraint}.
Since the $x_e$-variables can assume fractional values in the LP relaxation, and
the sum over all $x_e$ is minimized, this lower bound
of~$\frac{\flowsum(e)}{\ecap(e)}$ on~$x_e$ for all $e \in E$ will always be
met with equality.
Hence, the LP relaxation is equivalent to a standard \ac{MCF}-LP that
minimizes the sum of edge utilizations in the objective function~$\sum_{e \in E}
\frac{\flowsum(e)}{\ecap(e)}$.
Accordingly, we will refer to $\cost(e) \be \frac{1}{\ecap(e)}$ as the
\emph{cost} that routing a single unit of flow over an edge~$e$ will add to this
objective.

\begin{remark}
    The solution~$\lpfeas_e = \mcfalpha$ for all~$e \in E$ with objective
    value~$\mcfalpha|E|$ is always feasible~for the relaxation of
    ILP~\eqref{eq:ilp_mcfs}.
    When the input graph is simple and all edge capacities are uniform, it is in fact
    optimal: the flow~$\flow_{\comm}$ for commodity~$\comm$ will always be
    routed completely over~$\comm$ since all edges have the same cost and routing
    $\flow_{\comm}$ over an alternative path would cost more than routing it over a
    single edge.
    This implies that any arbitrary feasible solution for \prob{MMCFS} on simple
    graphs with uniform edge capacities is an $(\nicefrac{1}{\mcfalpha})$-approximation
    (which we already proved using a separate argument via
    \Cref{th:mcfs_unit_cap_approx}).
    However, below we also use the LP relaxation to
    approximate \prob{MMCFS} on non-simple graphs with non-uniform edge capacities.
\end{remark}

For our approximation algorithm, we make use of the well-known fact that all
standard LP solving algorithms
will always return a \emph{basic} optimal solution (if any solution
exists)~\cite[p.\ 279]{DBLP:books/daglib/0030297}.
A basic optimal solution---or equivalently, extreme point solution---is a vertex
of the polyhedron defined as the convex hull of the set of feasible solutions.
It does not lie on a higher-dimensional face of the polyhedron and thus cannot
be expressed as a convex combination of two or more other feasible solutions
\cite[p.\ 100]{DBLP:books/daglib/0004338}.
We prove that a basic optimal solution for the relaxation of
ILP~\eqref{eq:ilp_mcfs} will only have comparatively few variables with a
positive value below $\mcfalpha$, allowing us to round up the solution
to obtain an approximation.

\newcommand{\numsat}{\ensuremath{h}}
\newcommand{\numtiny}{\ensuremath{\ell}}
\newcommand{\satedges}{\ensuremath{H}}
\newcommand{\tinyedges}{\ensuremath{L}}
\begin{lemma}\label{th:lp_relaxation_bound}
    Let~$\lpopt$ be a basic optimal solution for the relaxation of
    ILP~\eqref{eq:ilp_mcfs},
    and $\numsat$~the number of edges~$e$ such that $\lpopt_e = 1$.
    There exist at most $\numsat$ many edges~$e'$ with $\lpopt_{e'} \in (0, \mcfalpha)$.
\end{lemma}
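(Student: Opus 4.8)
The plan is to exploit the structure of a basic feasible solution of the LP relaxation. Let me set up notation: let $\satedges = \{e : \lpopt_e = 1\}$ with $|\satedges| = \numsat$, and let $\tinyedges = \{e' : \lpopt_{e'} \in (0,\mcfalpha)\}$ with $|\tinyedges| = \numtiny$; I want to show $\numtiny \le \numsat$. The first step is to recall that in a basic solution, the number of variables that are strictly between their bounds is at most the number of linearly independent constraints that are NOT tight (equivalently, the nonbasic/tight structure pins down everything else). More usefully: consider the subsystem of equality constraints and the inequality constraints that hold with equality at $\lpopt$. At a basic solution, the submatrix of these tight constraints has full column rank, so the number of variables strictly inside their bounds is bounded by the number of tight constraints minus the rank of... — rather, the clean statement is that a basic solution is uniquely determined by the set of its tight constraints, so the "fractional" part of the solution lives in a subspace whose dimension is controlled.

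Concretely, I would argue as follows. By the Remark's reasoning, the $x_e$ variables are driven down to $x_e = \flowsum(e)/\ecap(e)$, so constraint \eqref{eq:ilp_cap_constraint} is tight for every $e$ with $x_e < 1$; in particular it is tight for every $e' \in \tinyedges$. For such an $e'$, since $0 < x_{e'} < \mcfalpha < 1$, the variable $x_{e'}$ is strictly between its bounds $0$ and $1$, hence it must be a basic variable, and moreover the flow $\flowsum(e') = x_{e'}\ecap(e') > 0$ is carried by at least one commodity $\comm$ with $\flow_{\comm}(e') > 0$; that flow variable is also strictly between its bounds and hence basic. The heart of the argument is then a charging/counting scheme: each edge $e' \in \tinyedges$ carries a small amount of "detour" flow (flow on $e'$ for a commodity $\comm \neq e'$, or — if $\comm = e'$ — only a fraction $x_{e'} < \mcfalpha$ of its own demand, with the rest of $\alldemand(e')$'s $\mcfalpha$-scaled demand forced onto detours through other edges). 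Since every unit of demand $\mcfalpha\alldemand(\comm) = \mcfalpha\ecap(\comm)$ that is NOT sent directly along $\comm$ must traverse at least one edge $e$ with $x_e = 1$ (an intuition analogous to \Cref{th:direct_flow} and the capacity-relation argument in the proof before \Cref{th:mcfs_trivial_approx}: detours are "expensive" and the LP would have avoided them unless forced, and what forces them is saturated edges), one should be able to injectively charge each $e' \in \tinyedges$ to a distinct edge of $\satedges$. I expect the cleanest route is a rank/dimension count rather than an explicit injection: the affine subspace of solutions agreeing with $\lpopt$ on all tight constraints is a single point, so $\#\{\text{strictly-interior variables}\} \le \#\{\text{tight inequality/equality constraints reduced modulo the flow-conservation equalities}\}$, and one shows the only tight inequalities that can "use up" a degree of freedom are the $x_e = 1$ constraints — the $x_{e'}=x_{e'}$ tightness of \eqref{eq:ilp_cap_constraint} for small $e'$ does not add independent constraints beyond what the flow structure already encodes, because (as in the Remark) those are implied equalities.

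The main obstacle, I expect, is making the charging argument both injective and exhaustive: it is easy to see each small-valued edge forces \emph{some} detour flow and hence \emph{some} saturated edge downstream, but different small edges might a priori route their detours through the \emph{same} saturated edge, or a single saturated edge might absorb detours from many small edges. The resolution should come from a basic-solution acyclicity/linear-independence property: if $\numtiny > \numsat$ then the tight-constraint matrix fails to have full column rank, so there is a nonzero vector in its kernel supported on the basic (strictly-interior) variables, and perturbing $\lpopt$ along $\pm$ this vector yields two distinct feasible solutions whose average is $\lpopt$ — contradicting that $\lpopt$ is an extreme point. Verifying feasibility of both perturbations (nonnegativity of flows, $x_e \in [0,1]$, flow conservation, capacity constraints) is where the condition $\lpopt_{e'} < \mcfalpha$ and $\numsat < \numtiny$ must be used to guarantee the perturbation is small enough and correctly signed; pinning down exactly which constraints to include in the tight system so that this count goes through cleanly is the delicate part of the write-up.
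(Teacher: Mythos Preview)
Your high-level strategy matches the paper's: assume $\numtiny > \numsat$ and exhibit a nonzero perturbation $p$ with $\lpopt \pm p$ both feasible, contradicting that $\lpopt$ is an extreme point. However, your proposal does not actually identify a working perturbation, and the two concrete routes you sketch both have gaps.

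First, the generic ``count tight constraints at a basic solution'' approach is not sharp enough here. The LP has $|E|+|E|^2$ variables and a large supply of flow-conservation equalities and nonnegativity bounds; a raw count of tight constraints will not isolate the inequality $\numtiny \le \numsat$ without additional structure. You seem to sense this and pivot to a charging argument, but then assert that every detoured unit of flow ``must traverse at least one edge $e$ with $\lpopt_e = 1$''. That statement is exactly \Cref{th:saturated_edges}, which in the paper is a \emph{corollary} of the present lemma (proved by the same perturbation machinery), so invoking it here is circular unless you supply an independent argument---and you do not.

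The paper's missing ingredient is a specific, finite-dimensional linear-dependence set-up. For each $e=st \in \tinyedges$, since $\lpopt_e < \mcfalpha$, the commodity $e$ must send positive flow along some alternative $s$-$t$-path $P_e$ avoiding $e$. Form the $\numtiny \times \numsat$ incidence matrix $M$ with $M(e,e'')=1$ iff $e'' \in P_e \cap \satedges$. If $\numtiny > \numsat$ the rows are dependent, giving $q \neq 0$ with $\transpose{q}M = 0$. The perturbation then shifts, for each $e \in \tinyedges$, an amount $\varepsilon q(e)$ of commodity-$e$ flow from $P_e$ onto $e$ (or vice versa). The point of $\transpose{q}M = 0$ is that the \emph{net} change on every saturated edge is zero, so capacity constraints there are preserved exactly; on all other edges there is slack, so small $\varepsilon$ suffices. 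A final wrinkle is showing $p \neq 0$: one picks the edge $e_{\max} \in \tinyedges$ of maximum cost among those with $q(e)\neq 0$ and uses optimality (an alternative path cannot cost more than the direct edge) to argue $e_{\max}$ is not on any relevant $P_e$, so its coordinate of $p$ is $\varepsilon q(e_{\max}) \neq 0$. None of this combinatorial structure---the choice of one alternative path per low edge, the restriction of the matrix to $\satedges$-columns only, or the cost-based argument for $p\neq 0$---appears in your plan.
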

\begin{proof}
    Let
    $\satedges \be \{e \in E \ssep \lpopt_e = 1\}$ be the $\numsat$ many edges
    that are saturated by the fractional \acl{MCF}, and $\tinyedges \be \{e \in E
    \ssep \lpopt_e \in (0,\mcfalpha)\}$ with $\numtiny \be |\tinyedges|$ the
    edges that are only used to a fraction less than $\mcfalpha$;
    in short, edges with \emph{high} and \emph{low} corresponding~$\lpopt$-values.
    We show that an optimal fractional solution~$\lpopt$ with $\numtiny >
    \numsat$ would allow us to construct a vector~$p \in \Q^{|E|}$ such that
    both $(\lpopt + p)$ and $(\lpopt - p)$ are still feasible
    solutions for the LP relaxation---thus, $\lpopt$~would not be basic.

    \newcommand{\tmpvec}{q}
    For every edge $e=st \in \tinyedges$, let $P_e$ denote an
    arbitrary \emph{alternative $s$-$t$-path} not using the edge~$e$
    and $\flow_e(e') > 0$ for all edges $e' \in P_e$.
    Such a path must exist since at most $\lpopt_e \cdot \ecap(e) < \mcfalpha \cdot
    \ecap(e)$ flow is routed over~$e$, so an alternative $s$-$t$-path is
    necessary to satisfy the demand $\mcfalpha\cdot\alldemand(s,t) = \mcfalpha
    \cdot \ecap(e)$.
    Further, the total cost of routing a unit of flow over~$P_e$ must be
    lower than or equal to the cost of routing it over $e$ itself, i.e.,
    $\sum_{e' \in P_e} \frac{1}{\ecap(e')} \leq \frac{1}{\ecap(e)}$, by
    the optimality of $\lpopt$.
    Based on the alternative paths, we construct a matrix~$M \in
    \{0,1\}^{\numtiny\times\numsat}$ indexed by pairs $(e,e'') \in \tinyedges
    \times \satedges$, with
    $M(e,e'') \be 1~\text{if}~e'' \in P_e,~\text{and}~0~\text{otherwise}$.
    Since $\numtiny > \numsat$, the $\numtiny$~rows of $M$ must be linearly
    dependent, i.e., there exists a vector of coefficients $\tmpvec \in
    \Q^{\numtiny}$ with $\tmpvec \neq \mathbf{0}$, such that $\transpose{\tmpvec} \cdot M =
    \mathbf{0}$ (and consequently, $-\transpose{\tmpvec} \cdot M = \mathbf{0}$).

    We can obtain two new feasible solutions by modifying the \ac{MCF}
    corresponding to the optimal basic solution $\lpopt$ as follows:
    for each edge~$e \in \tinyedges$, and using a small positive
    value~$\varepsilon \in \Q$, we send $\varepsilon \cdot \tmpvec(e)$ less flow
    over~$e$ itself and $\varepsilon \cdot \tmpvec(e)$ more flow over
    the alternative path $P_e$---or vice versa.
    Put formally, we argue that for a small enough~$\varepsilon > 0$, the
    following vector~$p \in \Q^{|E|}$ yields two feasible solutions
    $(\lpopt + p)$ and $(\lpopt - p)$ for the LP relaxation:
    \begin{align*}
        p(e') =
        \begin{cases*}
            \varepsilon\cdot\tmpvec(e') -\varepsilon \cdot \sum_{e \in \tinyedges \colon e' \in P_e} \tmpvec(e) & if $e' \in \tinyedges$,\\
            -\varepsilon\cdot \sum_{e \in \tinyedges \colon e' \in P_e}\tmpvec(e) & otherwise. %
        \end{cases*}
    \end{align*}

    Clearly, the two solutions satisfy all demands and flow preservation
    constraints~\eqref{eq:ilp_flow_preservation}.
    Consider the capacity constraints~\eqref{eq:ilp_cap_constraint}:
    By construction of~$\tmpvec$, the flow difference on saturated edges is 0.
    For all non-saturated edges $e'$ it holds: if we modified the flow routed over~$e'$,
    this flow was already non-zero before our modification, and $\varepsilon$
    can be chosen sufficiently small such that the flow will neither turn
    negative nor surpass~$\ecap(e')$.%

    \newcommand{\emax}{e_{\max}}
    It remains to show that $p \neq \mathbf{0}$ given that $\tmpvec \neq
    \mathbf{0}$.
    So, among the edges~$e \in \tinyedges$ with $\tmpvec(e) \neq 0$, choose
    one with maximum cost~$\frac{1}{\ecap(e)}$ and denote it by~$\emax$.
    We show that $p(\emax) \neq 0$.

    If~$\emax$ were contained in any of the alternative paths~$P_e$ with $e \in
    \tinyedges$ and $\tmpvec(e) \neq 0$, we would arrive at a
    contradiction even without using $p$:
    Recall that $\sum_{e' \in P_e} \frac{1}{\ecap(e')} \leq
    \frac{1}{\ecap(e)}$.
    So we must have $|P_e| = 1$, i.e., $e$ and $\emax$ are parallel
    edges with $\ecap(e) = \ecap(\emax)$.
    Since $e,\emax\in L$ and thus $\lpopt_e, \lpopt_{\emax} \in (0,\mcfalpha)$,
    we could simply obtain two new solutions by shifting some small
    $\varepsilon >0$ of flow from one to another, or vice versa, contradicting
    that $\lpopt$ is basic.

    Hence, $\emax$ is only contained in alternative paths~$P_e$ with~$e \in
    \tinyedges$ s.t.\ $\tmpvec(e) = 0$, and
    \begin{align*}
        p(\emax)
        = \varepsilon\cdot\tmpvec(\emax) -\varepsilon \cdot \sum_{e \in \tinyedges \colon \emax \in P_e} \tmpvec(e)
        = \varepsilon\cdot\tmpvec(\emax) - 0 \neq 0.&\qedhere
    \end{align*}
\end{proof}

The proof implicitly gives an intuitive explanation for the distribution of LP
values:

\begin{corollary}\label{th:saturated_edges}
    Let~$\lpopt$ be a basic optimal solution for the relaxation of
    ILP~\eqref{eq:ilp_mcfs}.
    For each edge $e=st$ with $\lpopt_e \in (0,\mcfalpha)$ it holds:
    every \emph{alternative $s$-$t$-path}~$P_e$ (not containing~$e$)
    with~$\flow_e(e') > 0$ for all of its edges $e' \in P_e$ must contain an
    edge~$e''$ with $\lpopt_{e''} = 1$.
\end{corollary}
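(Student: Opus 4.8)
The plan is to prove \Cref{th:saturated_edges} directly by contradiction, recycling the perturbation mechanism from the proof of \Cref{th:lp_relaxation_bound} but now applied to a single alternative path. So fix an edge $e=st$ with $\lpopt_e \in (0,\mcfalpha)$ and an alternative $s$-$t$-path $P_e$ with $e \notin P_e$ and $\flow_e(e') > 0$ for all $e' \in P_e$, and assume for contradiction that $\lpopt_{e'} < 1$ (i.e., $e'$ is not saturated) for every $e' \in P_e$. I would then exhibit a non-zero perturbation vector $p$ in the space of LP variables, supported only on the coordinates $x_e$, $\{x_{e'}\}_{e' \in P_e}$, $\flow_e(e)$, and $\{\flow_e(e')\}_{e' \in P_e}$, such that both $\lpopt + p$ and $\lpopt - p$ are feasible for the relaxation of \eqref{eq:ilp_mcfs}; since $\lpopt$ is then the midpoint of two distinct feasible points, it cannot be a vertex of the LP polyhedron, contradicting that it is basic.

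Concretely, $p$ re-routes $\varepsilon$ units of the flow of commodity~$e$ between the direct edge~$e$ and the path~$P_e$: it decreases $\flow_e(e)$ by $\varepsilon$, increases $\flow_e(e')$ by $\varepsilon$ for each $e' \in P_e$, decreases $x_e$ by $\varepsilon\cdot\cost(e)$, and increases $x_{e'}$ by $\varepsilon\cdot\cost(e')$ for each $e' \in P_e$ (with $\lpopt - p$ performing the reverse shift). Flow conservation~\eqref{eq:ilp_flow_preservation} is preserved because we merely trade one $s$-$t$ routing of commodity~$e$ for another, and the capacity constraints~\eqref{eq:ilp_cap_constraint} stay tight since the $x$-values are moved in lockstep with the flow sums. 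It remains to choose $\varepsilon > 0$ small enough that no variable leaves $[0,1]$ resp.\ $[0,\infty)$: for $\lpopt + p$ this needs $\varepsilon \le \flow_e(e)$ (so that $\flow_e(e)$ stays non-negative) and is otherwise automatic because $\lpopt_{e'} < 1$ provides slack for the increasing $x_{e'}$; for $\lpopt - p$ it needs $\varepsilon \le \min_{e' \in P_e}\flow_e(e')$ (positive by assumption) and uses $\lpopt_e < \mcfalpha < 1$ as slack for the increasing $x_e$. Finally $p \neq \mathbf{0}$, since it changes $x_e$ by $\varepsilon\cdot\cost(e) \neq 0$.

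The one delicate point---and the step I expect to be the main obstacle---is the requirement $\varepsilon \le \flow_e(e)$, i.e., that the \emph{direct} flow over~$e$ is strictly positive; the analogous issue is also latent in the proof of \Cref{th:lp_relaxation_bound}. I would address it by choosing $\lpopt$ among all basic optimal solutions so as to additionally maximise the total direct flow $\sum_{\comm \in E}\flow_{\comm}(\comm)$ (such a solution exists, as we optimise a linear objective over a face of the pointed LP polyhedron), which is the LP-analogue of \Cref{th:direct_flow}: if some edge had $\lpopt_e > 0$ but $\flow_e(e) = 0$, then $e$ would carry positive flow of some other commodity, and a cost-neutral swap---routing that commodity along a path currently used by commodity~$e$ while routing commodity~$e$ directly over~$e$---would strictly increase the total direct flow without leaving the optimal face, a contradiction. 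Hence $\flow_e(e) > 0$ whenever $\lpopt_e \in (0,\mcfalpha)$, and the perturbation argument goes through. (Alternatively, in the degenerate case $\flow_e(e) = 0$ one can argue from LP-optimality that $\cost(P_e) = \cost(e)$ and swap flow between commodity~$e$ on~$P_e$ and a conflicting commodity on~$e$, but the maximal-direct-flow route is cleaner.)
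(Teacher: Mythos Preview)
Your approach is the same as the paper's: argue by contradiction via the perturbation from \Cref{th:lp_relaxation_bound}, now specialized to the single edge~$e$ and path~$P_e$ (the paper phrases this as the row of~$M$ corresponding to~$e$ being all-zero, so one may shift $\varepsilon$ flow of commodity~$e$ between~$e$ and~$P_e$ in either direction).

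Where you go further is in isolating the requirement $\flow_e(e)>0$ for one of the two directions---and you are right that neither the paper's proof of \Cref{th:lp_relaxation_bound} nor of \Cref{th:saturated_edges} addresses it. This gap is genuine: the corollary, as stated for \emph{every} basic optimal~$\lpopt$, is false. Take $\mcfalpha=\tfrac12$, vertices $u,s,a,t$, and edges $us,\,st,\,sa,\,at,\,ut$ with capacities $1,1,2,2,\tfrac{1}{10}$. There is a basic optimal solution that routes commodity~$st$ entirely through $sa,at$ while commodity~$ut$ is routed through $us,st$; then $\lpopt_{st}=0.05\in(0,\mcfalpha)$ and the alternative path $P_{st}=(sa,at)$ has $\flow_{st}(sa)=\flow_{st}(at)=\tfrac12>0$, yet $\lpopt_{sa}=\lpopt_{at}=0.75<1$. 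One checks directly that this point is a vertex of the LP polyhedron: every candidate two-sided perturbation is blocked by one of the tight constraints $\flow_{st}(st)=0$ or $\flow_{ut}(sa)=\flow_{ut}(at)=0$. So the statement cannot be proved in the generality in which it is written.

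Your repair---passing to a basic optimum that additionally maximizes $\sum_{\comm}\flow_{\comm}(\comm)$---is correct and yields the right (weaker) claim: for \emph{that}~$\lpopt$ one has $\flow_e(e)>0$ whenever $\lpopt_e>0$, and your perturbation goes through. Just be aware that this proves strictly less than the corollary as phrased. Your parenthetical alternative, the cost-neutral swap with a conflicting commodity, does not salvage the general version either: the reverse swap is again blocked by $\flow_e(e)=0$, so it does not exhibit a line segment through~$(\lpopt,\flow)$.
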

\begin{proof}
    Assume that, for some edge~$e=st$ with~$\lpopt_e \in (0,\mcfalpha)$, there
    is a~$P_e$ that does not contain an edge~$e''$ with $\lpopt_{e''} = 1$.
    Then, we can follow the proof of \Cref{th:lp_relaxation_bound},
    choosing~$P_e$ as the alternative $s$-$t$-path for~$e$.
    As a result, the matrix~$M$ constructed in the proof contains a row of
    zeroes, allowing us to route~$\varepsilon$ more flow over~$e$
    and~$\varepsilon$ less flow over $P_e$, or vice versa.
    Thus, $\lpopt$ cannot be basic, a contradiction.
\end{proof}

We now analyze the following LP-rounding algorithm: compute a basic optimal
solution~$\lpopt$ for the relaxation of ILP~\eqref{eq:ilp_mcfs} in polynomial
time, and return the edge set~$\{e \in E \ssep \lpopt_e > 0\}$.
Based on \Cref{th:lp_relaxation_bound}, one could use na\"ive techniques to prove
approximation guarantees of~$(2 + \frac{1}{\mcfalpha})$ or~$\frac{2}{\mcfalpha}$
for this algorithm.
However, we provide the stronger bound of~$\max(\nicefrac{1}{\mcfalpha}, 2)$
based on the following intuition:
the algorithm either \enquote{misses} the optimal solution mainly due to edges
with $\lpopt_e \in (0,\mcfalpha)$ and we can obtain a 2-approximation via
\Cref{th:lp_relaxation_bound},
or due to edges with $\lpopt_{e} \in [\mcfalpha,1)$ and we can
round up the variables for a $\frac{1}{\mcfalpha}$-approximation.

\begin{theorem}\label{th:mcfs_approx}
    Let~$\lpopt$ be a basic optimal solution for the relaxation of ILP~\eqref{eq:ilp_mcfs}.
    The edge set $A \be \{e \in E \ssep \lpopt_e > 0\}$ is a
    $\max(\nicefrac{1}{\mcfalpha}, 2)$-approximation for \prob{MMCFS}.
    That is, rounding up~$\lpopt$ is a 2-approximation when $\mcfalpha \geq
    \frac{1}{2}$, and an $\nicefrac{1}{\mcfalpha}$-approximation when $\mcfalpha < \frac{1}{2}$.
\end{theorem}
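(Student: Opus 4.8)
The plan is to bound $|A|$ from above and $|\opt|$ from below, where $\opt$ is an optimal \prob{MMCFS} solution and $A$ is the rounded-up edge set. First I would partition the positive-valued edges of $\lpopt$ into the "high" set $\satedges = \{e : \lpopt_e = 1\}$ of size $\numsat$, the "medium" set $M = \{e : \lpopt_e \in [\mcfalpha, 1)\}$ of size $m$, and the "low" set $\tinyedges = \{e : \lpopt_e \in (0,\mcfalpha)\}$ of size $\numtiny$, so that $|A| = \numsat + m + \numtiny$. By \Cref{th:lp_relaxation_bound} we have $\numtiny \leq \numsat$. The central lower bound on $\opt$ comes from LP optimality: since $\lpval = \sum_e \lpopt_e \leq |\opt|$ (the integral optimum is feasible for the relaxation), and every edge in $M \cup \satedges$ contributes at least $\mcfalpha$ to this sum while every edge in $\satedges$ contributes exactly $1$, we get $|\opt| \geq \mcfalpha(m + \numsat) + (1-\mcfalpha)\numsat = \mcfalpha m + \numsat$. (One must also argue $\opt$-feasibility respects the LP, which is immediate since the defining \ac{MCF} for $\mcfalpha\cdot\alldemand$ in $\opt$ gives a $0/1$ assignment of the $x_e$ satisfying all constraints of ILP~\eqref{eq:ilp_mcfs}.)

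Next I would split into the two regimes. When $\mcfalpha \geq \frac12$: using $\numtiny \leq \numsat$ we bound $|A| = \numsat + m + \numtiny \leq 2\numsat + m \leq \frac{1}{\mcfalpha}(\mcfalpha m + \numsat) \cdot \frac{?}{}$—more carefully, $2\numsat + m \leq 2(\numsat + \mcfalpha m) = 2(\numsat + \mcfalpha m) \leq 2|\opt|$ since $2\mcfalpha \geq 1 \geq \mcfalpha$ forces $m \leq 2\mcfalpha m$, giving the desired 2-approximation. When $\mcfalpha < \frac12$: here I instead want the $\frac{1}{\mcfalpha}$ bound. Every edge in $M \cup \satedges$ has $\lpopt_e \geq \mcfalpha$, so $|\opt| \geq \lpval \geq \mcfalpha(m + \numsat)$; hence $m + \numsat \leq \frac{1}{\mcfalpha}|\opt|$. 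It remains to absorb $\tinyedges$: since $\numtiny \leq \numsat$ we get $|A| = \numsat + m + \numtiny \leq 2\numsat + m$, and I need $2\numsat + m \leq \frac{1}{\mcfalpha}|\opt|$. Using the sharper bound $|\opt| \geq \mcfalpha m + \numsat$ from the previous paragraph, $\frac{1}{\mcfalpha}|\opt| \geq m + \frac{1}{\mcfalpha}\numsat \geq m + 2\numsat$ exactly because $\frac{1}{\mcfalpha} > 2$ in this regime. So in both cases $|A| \leq \max(\frac{1}{\mcfalpha},2)\cdot|\opt|$.

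Finally I must verify that $A$ is actually \emph{feasible} for \prob{MMCFS}, i.e.\ that $\mcfalpha\cdot\alldemand$ is routable in $A$ (invoking \Cref{th:super_t}). This is the one genuinely non-arithmetic step: the \ac{MCF} $\{\flow_{\comm}\}$ witnessing optimality of $\lpopt$ routes $\mcfalpha\cdot\alldemand$ using only edges $e$ with $\flowsum(e) > 0$, and every such edge has $\lpopt_e \geq \frac{\flowsum(e)}{\ecap(e)} > 0$, hence lies in $A$; so the same \ac{MCF} certifies routability in $A$. Polynomial running time follows since a basic optimal solution of the (polynomial-size) LP relaxation is computable in polynomial time. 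The main obstacle I anticipate is getting the constant bookkeeping in the $\mcfalpha < \frac12$ case exactly right—specifically making sure the $\numtiny \leq \numsat$ slack from \Cref{th:lp_relaxation_bound} is enough to pay for the rounding of $\tinyedges$ without an extra additive term—but the sharpened lower bound $|\opt| \geq \mcfalpha m + \numsat$ (which uses that $\satedges$-edges contribute a full unit, not just $\mcfalpha$) is precisely what makes the two cases close cleanly at the threshold $\mcfalpha = \frac12$.
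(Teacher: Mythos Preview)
Your proof is correct and follows essentially the same approach as the paper: partition the support of $\lpopt$ by whether the value is $1$, in $[\mcfalpha,1)$, or in $(0,\mcfalpha)$, invoke \Cref{th:lp_relaxation_bound} to get $\numtiny\le\numsat$, and use $\lpval$ as a lower bound on $\ilpval$. The only cosmetic difference is that the paper writes the lower bound as $\lpval \ge \tfrac{z-\Delta}{2}+\mcfalpha\Delta$ (with $\Delta=m$) and then minimizes the resulting expression over $\Delta\in[0,z]$, whereas you keep the sharper bound $\lpval\ge \numsat+\mcfalpha m$ and do the two cases by direct inequality manipulation; both routes yield the identical ratio and close at $\mcfalpha=\tfrac12$ for the same reason.
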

\begin{proof}
    The solution~$A$ is clearly feasible.
    Let $\algval \be |A|$ be the number of edges~$e$ such that $\lpopt_e > 0$.
    Furthermore, let
    $\Delta = |\{e \in E  \ssep \lpopt_e \in \mathopen[\mcfalpha, 1)\}|$
    be the number of these $\algval$ edges whose $\lpopt$-variable is set to a value
    in the interval $\mathopen[\mcfalpha, 1)$.
    We know that the remaining $(\algval-\Delta)$ edges have a
    $\lpopt$-variable either set to $1$ or a value lower than $\mcfalpha$.
    In particular, by \Cref{th:lp_relaxation_bound}, there exist at least as
    many edges $e'$ with $\lpopt_{e'} = 1$ as there are edges~$e''$ with
    $\lpopt_{e''} \in (0,\mcfalpha)$.
    Thus, $|\{e \in E \ssep \lpopt_e = 1\}| \geq \frac{\algval-\Delta}{2}$.
    Hence, the optimal fractional solution $\lpopt$ has objective value
    \begin{align*}
        \lpval &\geq |\{e \in E  \ssep \lpopt_e = 1\}| + \mcfalpha \cdot |\{e \in E  \ssep \lpopt_e \in \mathopen[\mcfalpha, 1)\}|\\
               &\geq \frac{\algval-\Delta}{2} + \mcfalpha \cdot \Delta
               = \left(\frac{1}{2} + (\mcfalpha - \frac{1}{2}) \cdot
               \frac{\Delta}{\algval}\right) \cdot \algval.
    \end{align*}

    Using the minimum fractional objective value $\lpval$ as a lower bound for
    the minimum integral objective value~$\ilpval$, we can then give an upper
    bound for the approximation ratio:
    \begin{align*}
        r \be \frac{\algval}{\ilpval} \leq \frac{\algval}{\lpval}
        \leq \frac{1}{\frac{1}{2} + (\mcfalpha - \frac{1}{2}) \cdot \frac{\Delta}{\algval}}
    \end{align*}

    To obtain an upper bound on this ratio, we examine when its denominator
    is at its minimum.
    For $\mcfalpha \geq \frac{1}{2}$, the term $(\mcfalpha -
    \frac{1}{2})\cdot\frac{\Delta}{\algval}$ is always non-negative and reaches
    its lowest value~0 when $\Delta = 0$, giving us the upper bound~$2 \geq r$
    in this case.
    In contrast, for $\mcfalpha < \frac{1}{2}$, the term $(\mcfalpha -
    \frac{1}{2})\cdot\frac{\Delta}{\algval}$ is negative, and the minimum of the denominator is
    reached when~$\Delta = \algval$, leading to the upper
    bound~$\frac{1}{\mcfalpha} \geq r$ in this second case.
\end{proof}

It is surprisingly hard to find instances (in particular, ones without parallel
edges) where the worst-case approximation ratio given by~\Cref{th:mcfs_approx}
is actually met.
However, we show that such instances exist for the most relevant
$\mcfalpha \in (0,1)$, and hence our analysis is \emph{tight}:

\newcommand{\EXZ}{E_{XZ}}%
\newcommand{\EB}{E_{B}}%
\newcommand{\Ev}{E_{v}}%
\newcommand{\eXZ}{e_{XZ}}%
\newcommand{\ev}{e_{v}}%
\newcommand{\Eone}{E_1}%

\begin{lemma}\label{th:mcfs_approx_tightness_low}
    The ratio~$\nicefrac{1}{\mcfalpha}$ for the algorithm
    of \Cref{th:mcfs_approx} is tight for all~$\mcfalpha = \frac{1}{q}$,
    $q \in \N_{>1}$.
\end{lemma}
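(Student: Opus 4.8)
The plan is to construct, for each $\mcfalpha = \frac1q$ with integer $q > 1$, an explicit \prob{MMCFS} instance on which the LP-rounding algorithm of \Cref{th:mcfs_approx} returns a solution of size exactly $\frac1\mcfalpha = q$ times the optimum. By the proof of \Cref{th:mcfs_approx}, the ratio $\frac1\mcfalpha$ is attained only when (essentially) all positive $\lpopt$-variables of the basic optimal LP solution lie in the interval $[\mcfalpha, 1)$ --- so the target instance must be one where the LP ``wants'' to fractionally spread flow at level exactly $\mcfalpha$ over many edges, yet the integral optimum is tiny. The natural gadget for this is a small collection of ``shortcut'' edges whose direct routing is cheap but whose alternatives route through an expensive shared bottleneck; setting capacities so that the LP splits each shortcut's flow evenly into $q$ parts (one direct, $q-1$ through the bottleneck) forces $\lpopt_e = \mcfalpha$ on every shortcut edge while a single bottleneck edge suffices integrally.

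\textbf{Key steps.} First I would describe the construction: take $q$ parallel or near-parallel structures so that, in the LP, routing one unit of a commodity $\comm$ either over $\comm$ itself (cost $1$) or over an alternative path (total cost also $1$, by picking capacities to equalize $\cost$) is indifferent --- forcing, for each such edge $e$, an optimal basic solution with $\lpopt_e = \mcfalpha$. Concretely, I expect the instance to consist of a bundle where a set of ``light'' edges each of capacity $1$ can each be routed via a single shared ``heavy'' edge of capacity $\frac{1-\mcfalpha}{\mcfalpha}\cdot(\text{something})$, chosen so the per-unit cost along the two-hop detour equals the per-unit cost of the direct edge; this is the same flavour of gadget already used after \Cref{th:mcfs_trivial_approx} (the ``bundle of parallel edges'' example) and in \Cref{th:mcf_cps_relationship}. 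Second, I would verify that the LP relaxation indeed has a basic optimal solution $\lpopt$ with $\lpopt_e = \mcfalpha$ for all the light edges and $\lpopt = 1$ on a negligible number of structural edges: feasibility is routine, optimality follows because cost-per-unit is uniform across all routing options so no rerouting helps, and basicness requires checking that the chosen solution is an extreme point --- which I'd argue by a tight-constraint count or by invoking \Cref{th:saturated_edges} to confirm every light edge's alternative path does pass through a saturated edge. Third, I would exhibit the integral optimum: keep only the heavy/structural edges (and whatever \ac{MED}-forced edges exist), drop all light edges, and check via \Cref{th:super_t} that $\mcfalpha\cdot\alldemand$ is routable in this subgraph --- it is, precisely because each light commodity's demand $\mcfalpha\cdot\ecap(\comm) = \mcfalpha$ fits through the heavy edge. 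Counting: the algorithm returns all $\Theta(q\cdot m)$ light edges plus $O(m)$ structural ones, while OPT has only $O(m)$ edges, yielding ratio $\to q = \frac1\mcfalpha$.

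\textbf{Main obstacle.} The delicate part is \emph{basicness} of the LP solution with all $\lpopt_e = \mcfalpha$: the algorithm's ratio only bites on a basic optimal solution, and if the degenerate all-$\mcfalpha$ solution can be written as a convex combination of other optima (e.g.\ pushing all of one commodity's flow onto its direct edge), the LP solver might instead return an integral-looking optimum and the bad ratio evaporates. I would handle this by designing the gadget so that the direct edge of each light commodity is \emph{also} carrying redirected flow from some \emph{other} commodity --- i.e.\ the light edges are mutually entangled in a cyclic pattern (as in \Cref{fig:mcfs_may_not_contain_med}, scaled up), so that no single light edge can be set to $0$ or $1$ without violating the capacity or demand constraints of a neighbour. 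This forces every light edge's value strictly into $(0, \mcfalpha]$, and pinning it to exactly $\mcfalpha$ at an extreme point then follows from the tight-constraint argument. A secondary, purely bookkeeping obstacle is making the ratio exactly (not just asymptotically) $\frac1\mcfalpha$; I would absorb the $O(m)$ additive structural overhead by taking the number of light-edge bundles $\to\infty$, so the statement is tightness in the limit, which is the standard reading of ``tight'' here and matches the phrasing used for \Cref{th:mcfs_trivial_approx}.
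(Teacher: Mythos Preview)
Your proposal identifies the right target---an instance where a basic optimal LP solution has $\lpopt_e = \mcfalpha$ on essentially all edges while the integral optimum is a $\nicefrac{1}{q}$-fraction---but it never delivers a concrete construction, and the construction you gesture at (light edges plus a shared heavy bottleneck, with ``cyclic entanglement'' to rescue basicness) is both vague and working against itself. If a light edge carries redirected flow from other commodities in the LP optimum, its $\lpopt$-value rises above $\mcfalpha$ and possibly to $1$, undermining the very structure you need; if it does not, you are back to the unentangled bundle where the basic optimum may simply zero out the light edges. You acknowledge basicness as the main obstacle but do not actually resolve it, and you concede only asymptotic tightness.

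The paper's construction is different and sidesteps all of this. It takes the complete bidirected graph on $q+1$ vertices, fixes a directed Hamiltonian cycle $H$, and sets $\ecap(st) = \nicefrac{1}{\eta(st)}$ where $\eta(st)$ is the $s$-$t$-distance along $H$. Then the cost $\nicefrac{1}{\ecap(st)} = \eta(st)$ of every edge equals the total cost of its unique alternative $s$-$t$-path in $H$, so routing each commodity entirely over its own edge is LP-optimal with $\lpopt_e = \mcfalpha$ for \emph{every} edge. Basicness is a one-line argument: in this solution $\flow_{\comm}(e) = 0$ for $e \neq \comm$ (cannot decrease) and $\flow_{\comm}(\comm)$ already equals the full demand (cannot increase without a wasteful cycle), so no flow variable can move in both directions. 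The integral optimum is $E(H)$ itself (it is feasible by a direct capacity count and optimal because any feasible solution must preserve strong connectivity), and $|E| = q(q+1) = q\cdot|E(H)|$ gives the ratio $\nicefrac{1}{\mcfalpha}$ \emph{exactly}, not just in the limit. The key idea you are missing is that making all routing options cost-equivalent lets you place the entire LP flow on direct edges and get basicness for free, with no entanglement gadgetry needed.
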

\begin{proof}
    Consider a complete bidirected graph~$G=(V,E)$ on $q+1$ vertices,
    and let~$H\subset G$ denote an arbitrarily chosen directed
    Hamiltonian cycle in~$G$ (an example for $q=4$ is
    visualized in \Cref{fig:mcps_lp_rounding_tight_low_alpha}).
    For all edges $st \in E$, let~$\eta(st)$ denote the number of edges of
    the unique $s$-$t$-path in $H$ and set the capacity~$\ecap(st) \be
    \frac{1}{\eta(st)}$.
    $E(H)$ is a feasible integral solution for the \prob{MMCFS}
    instance~$(G,\ecap,\mcfalpha)$ since the demands
    $\mcfalpha\cdot\alldemand$ are routable in it:
    A single edge~$e \in H$ is able to satisfy its own demand
    $\mcfalpha\cdot\ecap(e) = \frac{1}{q}\cdot 1$.
    Adding to this, for every $i \in \{2,\dots,q\}$, there are $i$~edges
    $uv\in E\setminus E(H)$ with $\eta(uv) = i$ whose unique
    $u$-$v$-path in $H$ contains~$e$.
    The edge~$e$ can accommodate all of these commodities~$uv$
    because each of them has a demand
    of~$\mcfalpha\cdot\ecap(uv)=\frac{1}{q}\cdot\frac{1}{i}$, summing up
    to a total flow of~$\sum_{i=1}^q i \cdot \frac{1}{i} \cdot
    \frac{1}{q} = 1 \leq \frac{1}{\ecap(e)}$.
    Lastly, $E(H)$ is not only feasible but optimal since we need at least
    $|E(H)|$ many edges to preserve the reachability relation of~$G$ in~$H$.

    The algorithm from \Cref{th:mcfs_approx} would potentially
    choose all edges of $G$, i.e., $|E| = (|V| - 1)\cdot|V| = q\cdot |E(H)| =
    \frac{1}{\mcfalpha}\cdot |E(H)|$ many:
    As the cost $\frac{1}{\ecap(uv)} = \eta(uv)$ of an edge~$uv\in
    E\setminus E(H)$ is equal to the total cost of its unique
    $u$-$v$-path in $H$, all $u$-$v$-paths are equal in cost.
    Thus, one optimal solution for the relaxation of ILP~\eqref{eq:ilp_mcfs}
    simply routes all commodities~$\comm\in E$ completely over their own edges
    $\comm$ and chooses~$\lpopt_{\comm} = \alpha$.
    This solution is also basic:
    Let~$\flow$ denote the flow variables of this solution.
    If it were not basic,
    there would exist two other optimal solutions with flow variables $\flow',
    \flow''$ respectively, such that for some~$\comm, e\in E$, $\flow'_{\comm}(e) <\flow_{\comm}(e) < \flow''_{\comm}(e)$.
    However, if $e = \comm$, $\flow$ already routes the full demand of
    commodity~$\comm$ over~$e$ and we cannot increase~$\flow_{\comm}(e)$
    without introducing a flow cycle, which would raise the objective value and
    thus be suboptimal.
    In contrast, if $e \neq \comm$, $\flow_{\comm}(e)$ is already~$0$ and
    cannot be decreased.
\end{proof}

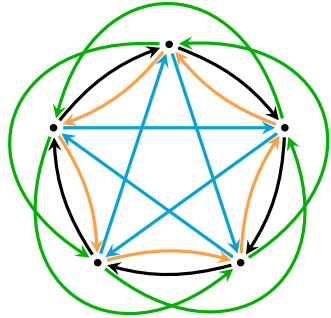
\begin{figure}[p]
    \centering
    \begin{tikzpicture}[scale=0.8]
        \begin{scope}[every node/.style={dot}]
            \node (a) at (234:2) {};
            \node (b) at (162:2) {};
            \node (c) at  (90:2) {};
            \node (d) at  (18:2) {};
            \node (e) at (306:2) {};
        \end{scope}

        \begin{scope}[every edge/.style={edge,bend left=15}]
            \path (a) edge (b);
            \path (b) edge (c);
            \path (c) edge (d);
            \path (d) edge (e);
            \path (e) edge (a);
        \end{scope}
        \begin{scope}[every edge/.style={edge,draw=myblue}]
            \path (a) edge (c);
            \path (c) edge (e);
            \path (e) edge (b);
            \path (b) edge (d);
            \path (d) edge (a);
        \end{scope}
        \begin{scope}[every path/.style={edge,draw=mygreen,bend right=60}]
            \path (a) .. controls (286:3.5) and (326:3.5) .. (d);
            \path (d) .. controls  (70:3.5) and (110:3.5) .. (b);
            \path (b) .. controls (214:3.5) and (254:3.5) .. (e);
            \path (e) .. controls  (-2:3.5) and  (38:3.5) .. (c);
            \path (c) .. controls (142:3.5) and (182:3.5) .. (a);
        \end{scope}
        \begin{scope}[every edge/.style={edge,bend left=15,draw=myorange}]
            \path (b) edge (a);
            \path (c) edge (b);
            \path (d) edge (c);
            \path (e) edge (d);
            \path (a) edge (e);
        \end{scope}
    \end{tikzpicture}
    \caption{\prob{MMCFS} instance for~$\mcfalpha = \frac{1}{4}$, constructed as a
        member of the family of instances in the proof
        of~\Cref{th:mcfs_approx_tightness_low}.
        The Hamiltonian cycle~$H$ is drawn as black.
        The value~$\eta(st)$, denoting the distance from~$s$ to~$t$ in~$H$ for
        an edge~$st$, is 2, 3, or 4 for blue, green, and orange edges,
        respectively.
    }
    \label{fig:mcps_lp_rounding_tight_low_alpha}
\end{figure}

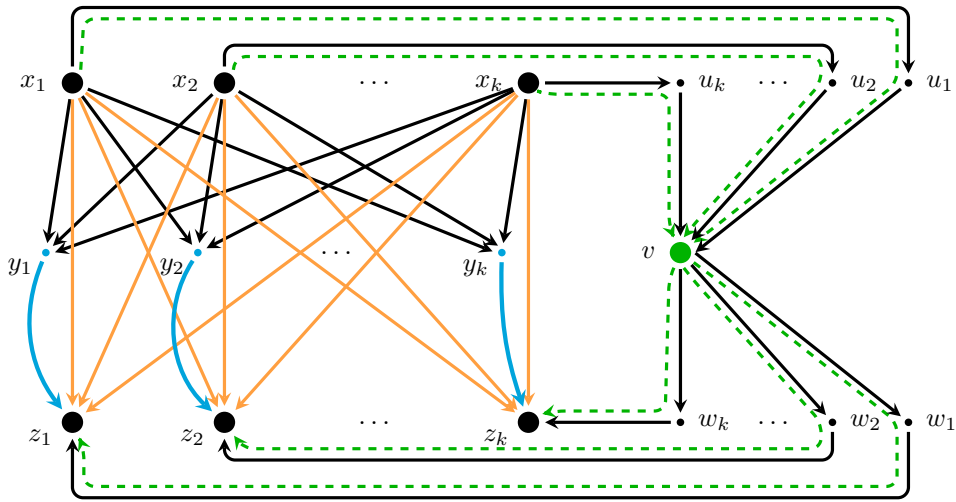
\begin{figure}[p]
    \centering
    \begin{tikzpicture}[scale=0.5]
        \begin{scope}[every node/.style={dot, minimum size=8pt}]
            \node[label={[label distance=1mm]180:$x_1$}] (x1) at (0,4.5) {};
            \node[label={[label distance=1mm]180:$x_2$}] (x2) at (4,4.5) {};
            \node[label={[label distance=1mm]180:$x_k$}] (xn) at (12,4.5) {};
            \node[label={[label distance=1mm]185:$z_1$}] (z1) at (0,-4.5) {};
            \node[label={[label distance=1mm]185:$z_2$}] (z2) at (4,-4.5) {};
            \node[label={[label distance=1mm]185:$z_k$}] (zn) at (12,-4.5) {};
            \node[fill=mygreen, label={[label distance=1mm]180:$v$}] (v) at (16,0) {};
        \end{scope}
        \begin{scope}[every node/.style={dot}]
            \node[fill=myblue,label={[label distance=1mm]185:$y_1$}] (y1) at (-0.7,0) {};
            \node[fill=myblue,label={[label distance=1mm]185:$y_2$}] (y2) at (3.3,0) {};
            \node[fill=myblue,label={[label distance=1mm]185:$y_k$}] (yn) at (11.3,0) {};
            \node[label={[label distance=1mm]0:$u_1$}] (u1) at (22,4.5) {};
            \node[label={[label distance=1mm]0:$u_2$}] (u2) at (20,4.5) {};
            \node[label={[label distance=1mm]0:$u_k$}] (un) at (16,4.5) {};
            \node[label={[label distance=1mm]0:$w_1$}] (w1) at (22,-4.5) {};
            \node[label={[label distance=1mm]0:$w_2$}] (w2) at (20,-4.5) {};
            \node[label={[label distance=1mm]0:$w_k$}] (wn) at (16,-4.5) {};
        \end{scope}

        \begin{scope}[every node/.style={circle,minimum size=1pt,inner sep=1pt}]
            \node (xdots) at (8,4.5) {$\dots$};
            \node (zdots) at (8,-4.5) {$\dots$};
            \node (ydots) at (7,0) {$\dots$};
            \node (udots) at (18.5,4.5) {$\dots$};
            \node (wdots) at (18.5,-4.5) {$\dots$};
        \end{scope}

        \begin{scope}[every path/.style={edge,draw=mygreen, very thick, dashed}]
            \draw (x1.45) -- ($(x1)+(0.3,1.7)$) -- ($(u1)+(-0.3,1.7)$) -- ($(u1)+(-0.3,0)$) -- ($(v)+(0.8,0.6)$) -- (v);
            \draw (x2.45) -- ($(x2)+(0.3,0.7)$) -- ($(u2)+(-0.3,0.7)$) -- ($(u2)+(-0.3,0)$) -- ($(v)+(0.4,0.8)$) -- (v.90);
            \draw (xn.-45) -- ($(xn)+(0.5,-0.3)$) -- ($(un)+(-0.3,-0.3)$) -- ($(v)+(-0.3,0.7)$) -- (v);
            \path (v) -- ($(v)+(0.8,-0.6)$) -- ($(w1)+(-0.3,0)$) -- ($(w1)+(-0.3,-1.7)$) -- ($(z1)+(0.3,-1.7)$) -- ($(z1)+(0.3,-0.7)$) -- (z1);
            \path (v.-90) -- ($(v)+(0.4,-0.8)$) -- ($(w2)+(-0.3,0)$) -- ($(w2)+(-0.3,-0.7)$) -- ($(z2)+(0.5,-0.7)$) -- (z2);
            \path (v) -- ($(v)+(-0.3,-0.7)$) -- ($(wn)+(-0.4,0.3)$) -- ($(zn)+(0.7,0.3)$) -- (zn.40);
        \end{scope}

        \begin{scope}[every edge/.style={edge}]
            \path (x1) edge (y1); \path (x1) edge (y2); \path (x1) edge (yn.190);
            \path (x2) edge (y1); \path (x2) edge (y2); \path (x2) edge (yn.80);
            \path (xn) edge (y1.-20); \path (xn) edge (y2); \path (xn) edge (yn);
        \end{scope}
        \begin{scope}[every edge/.style={edge, draw=myorange, very thick}]
            \path (x1) edge (z1); \path (x1) edge (z2); \path (x1) edge (zn);
            \path (x2) edge (z1); \path (x2) edge (z2); \path (x2) edge (zn);
            \path (xn) edge (z1); \path (xn) edge (z2); \path (xn) edge (zn);
        \end{scope}
        \begin{scope}[every edge/.style={edge, draw=myblue, ultra thick}]
            \path (y1) edge[bend right=30] (z1);
            \path (y2) edge[bend right=40] (z2);
            \path (yn) edge[bend right=10] (zn);
        \end{scope}
        \begin{scope}[every path/.style={edge}]
            \draw (x1) -- ($(x1)+(0,2)$) -- ($(u1)+(0,2)$) -- (u1);
            \draw (x2) -- ($(x2)+(0,1)$) -- ($(u2)+(0,1)$) -- (u2);
            \draw (xn) -- (un);
            \draw (u1) -- (v.-15);
            \draw (u2) -- (v);
            \draw (un) -- (v);
            \draw (w1) -- ($(w1)-(0,2)$) -- ($(z1)-(0,2)$) -- (z1);
            \draw (w2) -- ($(w2)-(0,1)$) -- ($(z2)-(0,1)$) -- (z2);
            \draw (wn) -- (zn);
            \draw (v.15) -- (w1);
            \draw (v) -- (w2);
            \draw (v) -- (wn);
        \end{scope}
    \end{tikzpicture}
    \caption{Family of \prob{MMCFS} instances constructed in the proof
    of~\Cref{th:mcfs_approx_tightness_high}.
    Edges in $\Ev$ are drawn in green (and dashed), edges in $\EXZ$ are orange,
    edges in $\Eone$ are black, and edges in $\EB$ are blue.
    Recall that edges $\Eone$ and $\EB$ (black and blue) form the unique
    \ac{MED}, which is contained in every feasible solution.
    There exists a feasible \prob{MMCFS} solution~$\Eone \union \EB \union \Ev$
    without any (orange) edges from~$\EXZ$, but the algorithm from
    \Cref{th:mcfs_approx} will include all edges of~$\EXZ$ in its solution.}
    \label{fig:mcps_lp_rounding_tight_high_alpha}
\end{figure}

\begin{lemma}\label{th:mcfs_approx_tightness_high}
    The ratio~$2$ for the algorithm
    of \Cref{th:mcfs_approx} is tight for all~$\mcfalpha > \frac{1}{2}$.
\end{lemma}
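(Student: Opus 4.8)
The plan is to realise the family of instances sketched in \Cref{fig:mcps_lp_rounding_tight_high_alpha}. Fix $\mcfalpha\in(\tfrac12,1)$ and, for large $k$, build $G_k$ on vertex classes $X,Y,Z,U,W$ of size $k$ plus one vertex $v$; put $\Eone\be\{x_iy_j\}\cup\{x_iu_i\}\cup\{u_iv\}\cup\{vw_j\}\cup\{w_jz_j\}$, $\EB\be\{y_jz_j\}$, $\EXZ\be\{x_iz_j\}$, $\Ev\be\{x_iv\}\cup\{vz_j\}$, and let every capacity be $1$ except $\ecap(x_iy_j)\be C_1\be\tfrac{\mcfalpha}{1-\mcfalpha}-\tfrac1k$ and $\ecap(y_jz_j)\be kC_1$. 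Everything hinges on $C_1$: for $k$ large it lies in the interval $\bigl(1,\tfrac{\mcfalpha}{1-\mcfalpha}\bigr)$, which is nonempty \emph{precisely because} $\mcfalpha>\tfrac12$. The lower bound $C_1>1$ (more precisely $C_1>\tfrac{k+1}{k}$) will make the two-hop detour $x_i\to y_j\to z_j$ strictly cheaper, under the cost $\cost(e)=1/\ecap(e)$, than the direct edge $x_iz_j$; the upper bound $C_1<\tfrac{\mcfalpha}{1-\mcfalpha}$ makes $(1-\mcfalpha)C_1<\mcfalpha$, i.e.\ guarantees that the detour cannot on its own carry the whole $\mcfalpha$ units of demand of the commodity $x_iz_j$, leaving a positive residual $\mcfalpha-(1-\mcfalpha)C_1=\tfrac{1-\mcfalpha}{k}$.

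The easy parts come first. Since $G_k$ is a simple \ac{DAG}, its unique \ac{MED} is the set of edges $st$ with no further $s$-$t$-path~\cite{DBLP:journals/siamcomp/AhoGU72}; a short case check (the $X$-vertices have no in-edges; $y_j,u_i,w_j$ have a single out-edge each; every edge of $\EXZ$ and $\Ev$ has an alternative path) identifies this set as exactly $\Eone\cup\EB$, which by \Cref{th:med_in_mcfs} lies in every feasible solution, so $|\opt|\ge|\Eone\cup\EB|=k^2+5k$. And $\Eone\cup\EB\cup\Ev$ is feasible: route every \ac{MED}- and $\Ev$-commodity along its own edge, and split each $x_iz_j$-commodity into $(1-\mcfalpha)C_1$ units along $x_i\to y_j\to z_j$ and $\tfrac{1-\mcfalpha}{k}$ units along $x_i\to v\to z_j$; one checks that $x_iy_j$, $y_jz_j$, $x_iv$ and $vz_j$ all become exactly saturated and that the remaining edges carry only their own commodity, so $|\opt|\le|\Eone\cup\EB\cup\Ev|=k^2+7k$.

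The core step is to exhibit a basic optimal solution $\lpopt$ of the relaxation of ILP~\eqref{eq:ilp_mcfs} that picks up all of $\EXZ$. I would take the \ac{MCF} that routes every \ac{MED}- and $\Ev$-commodity directly and routes $(1-\mcfalpha)C_1$ units of each $x_iz_j$-commodity over the detour $x_i\to y_j\to z_j$ and $\tfrac{1-\mcfalpha}{k}$ units over the edge $x_iz_j$, and set $\lpopt_e\be\flowsum(e)/\ecap(e)$. Its feasibility is the computation of the previous paragraph; the delicate point is optimality, argued on the objective $\sum_e\flowsum(e)/\ecap(e)$: the detour costs $\tfrac1{C_1}+\tfrac1{kC_1}=\tfrac{k+1}{kC_1}<1$ and is the unique cheapest $x_i$-$z_j$-path, so an optimal LP solution routes the maximum possible over it per commodity, namely $(1-\mcfalpha)C_1$ units --- the binding constraint being that $x_iy_j$ must also carry its own $\mcfalpha C_1$ units, which at the same time exactly exhausts the residual capacity $(1-\mcfalpha)kC_1$ of $y_jz_j$; the leftover $\tfrac{1-\mcfalpha}{k}$ units of each $x_iz_j$-commodity then have to go over the direct edge $x_iz_j$ (cost $1$), which is strictly cheaper than every other $x_i$-$z_j$-path, since all of those go through $v$ or the spine and cost at least $2$. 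As this forces the routing of every commodity, $\lpopt$ is the unique LP optimum and hence basic; equivalently it matches the picture of \Cref{th:lp_relaxation_bound} and \Cref{th:saturated_edges}, the $k^2$ ``tiny'' edges $\EXZ$ (value $\tfrac{1-\mcfalpha}{k}\in(0,\mcfalpha)$) being supported by the pairwise distinct saturated edges $x_iy_j$, so the relevant incidence matrix is the identity. Consequently the algorithm returns $A=\{e:\lpopt_e>0\}=(\Eone\cup\EB)\cup\EXZ\cup\Ev$ with $|A|=2k^2+7k$.

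Combining, $\tfrac{|A|}{|\opt|}\ge\tfrac{2k^2+7k}{k^2+7k}\to2$ while $\tfrac{|A|}{|\opt|}\le2$ by \Cref{th:mcfs_approx} --- consistently so, because only the $O(k)$ spine and $\Ev$ edges have $\lpopt$-value in $[\mcfalpha,1)$, making the fraction $\tfrac{\Delta}{\algval}$ in that proof $o(1)$. Hence $\tfrac{|A|}{|\opt|}\to2$, and this works for every fixed $\mcfalpha>\tfrac12$. The hard part is the capacity bookkeeping of the optimality argument --- forcing the detour to be strictly the cheapest $x_i$-$z_j$-route yet too small to absorb a full unit of $\EXZ$-demand, while keeping the direct $\EXZ$-edge cheaper than any route through the high-out-degree vertex $v$ --- and this is exactly where $\mcfalpha>\tfrac12$ is used, via the interval $1<C_1<\tfrac{\mcfalpha}{1-\mcfalpha}$ being nonempty. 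A secondary concern is a clean proof of basicness (which I would obtain from uniqueness of the LP optimum, after perturbing the unit capacities on the spine and on $\Ev$ infinitesimally to break cost ties among the $O(k)$ low-order commodities) and confirming $\Delta=o(k^2)$ so that the guarantee of \Cref{th:mcfs_approx} indeed tends to $2$.
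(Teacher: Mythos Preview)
Your proof is correct and takes essentially the same approach as the paper: identical graph topology, identical argument structure (the \ac{MED} plus the $2k$ edges of $\Ev$ is feasible of size $k^2+7k$, while any basic LP optimum must put positive weight on all $k^2$ edges of $\EXZ$ because the cheap detour through $Y$ cannot absorb the full $\EXZ$-demand and the direct $\EXZ$-edge is strictly cheaper than any remaining path through $v$). The only difference is which edges carry the non-unit capacities---you put them on $X\times Y$ and $\EB$, the paper puts them on $\EXZ$ and $\Ev$---which is cosmetic; your LP-optimality argument is in fact more explicit than the paper's (which only compares the $\EXZ$-edge against the $\Ev$-two-path and leaves the role of the $Y$-detour implicit), and your uniqueness claim goes through without the perturbation you mention, since in your instance every commodity has a strictly ordered sequence of path costs and hence a uniquely determined optimal routing.
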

\begin{proof}
    We construct a family of \prob{MMCFS} instances (visualized in
    \Cref{fig:mcps_lp_rounding_tight_high_alpha}), each one consisting
    of a simple graph~$G=(V,E)$ with edge capacities~$\ecap$ and the
    given retention ratio~$\mcfalpha$, that meets the approximation
    ratio of 2 asymptotically with increasing size.
    Intuitively, the edge set of each instance contains two subsets that have a
    size quadratic in the number of vertices and two subsets of linear size; our
    algorithm chooses both quadratic-size subsets even though one could be
    replaced by a linear-size subset.

    Let $V$ be comprised of five disjoint vertex sets $X$, $Y$, $Z$, $U$, $W$
    with $k \in \N$ vertices each (respectively named with the corresponding
    lowercase letter and indexed by $i \in \{1,\dots,k\}$) and a distinct
    vertex~$v$.
    Further, let $\varepsilon \leq \frac{1-\mcfalpha}{k\cdot\mcfalpha}$ and $B
    \geq \frac{k\cdot\varepsilon}{1-\mcfalpha}$ be sufficiently low and high
    positive values, respectively.
    Then, $E \be \Ev \union \EXZ \union \Eone \union \EB$ with
    \begin{align*}
        \Ev &\be (X \times \{v\}) \union (\{v\} \times Z) &
        \EXZ &\be X \times Z\\
        \Eone &\be X \times Y \union \{x_iu_i,u_iv,vw_i,w_iz_i \ssep i \in
        \{1,\dots,k\}\} & \EB &\be \{y_iz_i \ssep i \in \{1,\dots,k\}\}\\
        \ecap(e) &\be
        \mathrlap{\Big\{{\textstyle \frac{1-\mcfalpha}{\mcfalpha}}~\text{if}~e \in \Ev,\qquad
                {\textstyle \frac{1-\mcfalpha+\varepsilon}{\mcfalpha}}~\text{if}~e \in \EXZ,\qquad
        1~\text{if}~e \in \Eone,\qquad
        B~\text{if}~e \in \EB.}
    \end{align*}

    $G$ is a \ac{DAG}, thus its \ac{MED} is
    unique~\cite{DBLP:journals/siamcomp/AhoGU72} and must be part of any
    feasible solution for the \prob{MMCFS}
    instance~$(G,\ecap,\mcfalpha)$, see \Cref{th:med_in_mcfs}.
    The \ac{MED} of~$G$ consists of the edges~$\Eone \union \EB$ and in
    particular can satisfy all demands~$\mcfalpha\cdot\alldemand(e) =
    \mcfalpha\cdot\ecap(e)$ with $e \in \Eone \union \EB$.
    The \enquote{remaining} capacity of~$1-\mcfalpha$ for each of the edges
    in~$\Eone$ (and $(1-\mcfalpha) \cdot B$ for edges in $\EB$) is
    also sufficient to satisfy the demands $\mcfalpha\cdot\alldemand(\ev)$
    of the commodities $\ev \in \Ev$, and to \emph{almost} satisfy the
    demands $\mcfalpha\cdot\alldemand(\eXZ) = 1 - \mcfalpha +
    \varepsilon$ for the commodities~$\eXZ \in \EXZ$:
    it only leaves a demand of~$\varepsilon$ for every such~$\eXZ$.
    Thus, there exists a feasible solution~$\Eone \union \EB \union \Ev$ for
    $(G,\ecap,\mcfalpha)$ that contains all of the $k^2+5k$~edges from~$\Eone
    \union \EB$ and additionally, to satisfy the remaining demands (of
    size~$\varepsilon$ each), the $2k$~edges from~$\Ev$.
    This feasible solution gives an upper bound on the objective
    value~$\ilpval$ of the optimal integral solution: $\ilpval \leq k^2
    +7k$.

    In contrast, the algorithm from \Cref{th:mcfs_approx} also includes
    all $k^2+5k$~edges of the \ac{MED}~$\Eone \union \EB$ in its
    solution, but must additionally choose (at least) the $k^2$~edges
    from~$\EXZ$, resulting in an objective value~$\algval \geq 2k^2 +5k$. This
    is because every optimal solution for the relaxation of
    ILP~\eqref{eq:ilp_mcfs} routes the commodities $st=\eXZ \in \EXZ$ over the
    edges~$\eXZ$ themselves: they have a lower cost than the alternative
    $s$-$t$-path of length 2 over the edges in~$\Ev$.

    For increasing~$k$, the ratio of the algorithmic solution's objective value
    to the optimum is
    \begin{align*}
        \lim_{k \to \infty} \frac{\algval}{\ilpval} \geq \lim_{k \to
        \infty}\frac{2k^2+5k}{k^2+7k} = 2.&\qedhere
    \end{align*}
\end{proof}

Combining \Cref{th:mcfs_approx} with \Cref{th:mcfs_approx_tightness_low,th:mcfs_approx_tightness_high}, we obtain:

\begin{corollary}\label{th:mcfs_approx_tightness}
    The approximation ratio~$\max(\nicefrac{1}{\mcfalpha}, 2)$ for the algorithm
    given by \Cref{th:mcfs_approx} is tight for all~$\mcfalpha > \frac{1}{2}$
    and all $\mcfalpha = \frac{1}{q}$ with~$q \in \N_{>1}$.
\end{corollary}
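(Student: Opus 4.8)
The plan is to observe that \Cref{th:mcfs_approx_tightness} is an immediate consequence of combining the upper bound from \Cref{th:mcfs_approx} with the two matching lower-bound constructions, once one checks that in each of the two parameter regimes named in the statement the expression $\max(\nicefrac{1}{\mcfalpha}, 2)$ collapses to exactly the value shown to be attained by the corresponding lemma. In other words, the substantive work has already been done; what remains is bookkeeping.

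First I would recall that \Cref{th:mcfs_approx} already establishes that the LP-rounding algorithm is a $\max(\nicefrac{1}{\mcfalpha}, 2)$-approximation, so it suffices to exhibit, for each admissible $\mcfalpha$, an instance family on which this ratio is (asymptotically) met. I would then split into the two cases of the statement. For $\mcfalpha > \frac{1}{2}$ we have $\nicefrac{1}{\mcfalpha} < 2$, hence $\max(\nicefrac{1}{\mcfalpha}, 2) = 2$, and \Cref{th:mcfs_approx_tightness_high} supplies a family of instances on which $\nicefrac{\algval}{\ilpval} \to 2$; so the bound is tight there. For $\mcfalpha = \frac{1}{q}$ with $q \in \N_{>1}$ we have $\nicefrac{1}{\mcfalpha} = q \geq 2$, hence $\max(\nicefrac{1}{\mcfalpha}, 2) = \nicefrac{1}{\mcfalpha}$, and \Cref{th:mcfs_approx_tightness_low} provides instances meeting the ratio $\nicefrac{1}{\mcfalpha}$ exactly; so the bound is tight there as well.

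The only point requiring any care---and it is the \enquote{hard part} only in the most formal sense---is confirming that the two lemmas' hypotheses precisely partition the relevant values of $\mcfalpha$ according to which branch of the maximum is active: \Cref{th:mcfs_approx_tightness_high} is stated for $\mcfalpha > \frac{1}{2}$, which is exactly the regime where the constant $2$ dominates, while \Cref{th:mcfs_approx_tightness_low} is stated for reciprocals of integers greater than one, all of which satisfy $\nicefrac{1}{\mcfalpha} \geq 2$ so that the $\nicefrac{1}{\mcfalpha}$ branch dominates. No further construction or estimate is needed; assembling the three earlier results (\Cref{th:mcfs_approx,th:mcfs_approx_tightness_low,th:mcfs_approx_tightness_high}) yields the claim directly.
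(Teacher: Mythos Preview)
Your proposal is correct and mirrors the paper's own treatment exactly: the paper simply states that combining \Cref{th:mcfs_approx} with \Cref{th:mcfs_approx_tightness_low,th:mcfs_approx_tightness_high} yields the corollary, without further argument. Your additional bookkeeping about which branch of the maximum is active in each regime is a welcome elaboration but not a departure from the intended proof.
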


\begin{remark}
    There are instances where the integrality gap of
    ILP~\eqref{eq:ilp_mcfs} is $\frac{1}{\mcfalpha}$---e.g.\ trees, where the
    unique optimal integral solution contains every edge while the unique optimal
    fractional solution chooses $\lpopt_e = \mcfalpha$ for all edges~$e$ of the input
    graph.
    Interestingly, the approximation ratio of our algorithm (for $\mcfalpha \leq
    \frac{1}{2}$) equals the integrality gap exactly, but on completely different
    instances and not on trees (where the algorithm always finds the optimal
    solution).
\end{remark}

\section{Conclusion and Open Questions}\label{sec:conclusion}

We introduced the practically motivated \probf{MMCFS} problem and paved the way
for further research by giving
several structural results, most importantly a reformulation of this traffic-oblivious
problem that only needs to consider a single specific traffic matrix.
Further, we showed that \prob{MMCFS} is NP-hard (and a closely related problem
cannot be approximated within a sublogarithmic factor) already on \acp{DAG}.
Lastly, we gave an extremely simple LP-rounding scheme for
\prob{MMCFS} with a tight approximation guarantee of~$\max(\nicefrac{1}{\mcfalpha}, 2)$.

Considering seemingly related problems (see \Cref{sec:relwork}), one observes
that an approximation ratio of 2 (which we attain for the practically most
relevant cases of $\mcfalpha\geq\frac12$~\cite{DBLP:conf/lcn/OttenICA23}) often
arises as a seemingly ``natural'' limit for such ratios.
Yet, it remains an open question whether there exists an approximation algorithm
for \prob{MMCFS} with a better quality guarantee, and whether there is a
non-trivial lower bound on the approximation guarantee for any such algorithm
(assuming $\text{P} \neq \text{NP}$).
Further, it might be of interest to explore several generalizations of
\prob{MMCFS}: this includes the non-traffic-oblivious variant where a specific
traffic matrix is part of the input (which is also NP-hard via
\Cref{th:super_t}), and the variant where, given an additional cost function on
the edges, one asks for a subgraph of minimum~cost.

\newpage

\bibliography{main_reduced}

\end{document}